\documentclass[runningheads]{llncs}

\usepackage[T1]{fontenc}
\usepackage{amsmath}
\usepackage{amssymb}
\usepackage{graphicx}
\usepackage{hyperref}

\usepackage{amssymb}%
\usepackage{pifont}%
\usepackage{mathtools}
\usepackage{booktabs}
\usepackage{xcolor}
\usepackage{multirow}
\usepackage[linesnumbered,lined,ruled,noend]{algorithm2e}
\usepackage[capitalize]{cleveref}
\usepackage{tabularx}
\usepackage{todonotes}
\usepackage{cite}
\usepackage{gnuplot-lua-tikz}
\usetikzlibrary{calc}
\usepackage{fancyvrb}

\usepackage{versions}
\includeversion{ExtendedVersion}

\newcommand{\crefAppendix}[1]{the full version\xspace}%
\begin{ExtendedVersion}
\renewcommand{\crefAppendix}[1]{\cref{appendix:#1}}
\end{ExtendedVersion}

\DontPrintSemicolon
\SetKwInOut{Input}{Input}
\SetKwInOut{Output}{Output}
\SetKw{Outputs}{output}

\newcommand{\SrcURL}{\url{https://doi.org/10.5281/zenodo.6558657}\xspace}
\newcommand{\AP}{\mathrm{AP}}
\newcommand{\SK}{\mathrm{SK}}
\newcommand{\BK}{\mathrm{BK}}
\newcommand{\PK}{\mathrm{PK}}

\newcommand{\Enc}{\textrm{Enc}}
\newcommand{\Dec}{\textrm{Dec}}

\newcommand{\defiff}{\stackrel{\mathrm{def}}{\iff}}
\newcommand{\cequiv}{\stackrel{\mathsf{c}}{\equiv}}
\newcommand{\Rgets}{\stackrel{\mathsf{R}}{\gets}}
\newcommand{\X}{\mathsf{X}}
\newcommand{\G}{\mathsf{G}}
\newcommand{\F}{\mathsf{F}}
\newcommand{\B}{\mathbb{B}}
\newcommand{\N}{\mathbb{N}}
\newcommand{\CMux}{\textsc{CMux}\xspace}
\newcommand{\LookUp}{\textsc{LookUp}\xspace}
\newcommand{\SampleExtract}{\textsc{SampleExtract}\xspace}
\newcommand{\Bootstrapping}{\textsc{Bootstrapping}\xspace}
\newcommand{\CircuitBootstrapping}{\textsc{CircuitBootstrapping}\xspace}
\newcommand{\Trivial}{\textsc{Trivial}\xspace}

\newcommand{\TLWE}{\textsf{TLWE}\xspace}
\newcommand{\TRLWE}{\textsf{TRLWE}\xspace}
\newcommand{\TRGSW}{\textsf{TRGSW}\xspace}
\newcommand{\FHOffline}{\textsc{Offline}\xspace}
\newcommand{\ReverseStream}{\textsc{Reverse}\xspace}
\newcommand{\BlockStream}{\textsc{Block}\xspace}
\newcommand{\BootstrappingInterval}{I_{\mathrm{boot}}}
\newcommand{\OutputInterval}{I_{\mathrm{out}}}
\newcommand{\Rev}[1]{#1^{\mathrm{R}}}

\newcommand{\IdentityKeySwitching}{\textsc{IdentityKeySwitching}\xspace}
\newcommand{\PrivateKeySwitching}{\textsc{PrivateKeySwitching}\xspace}
\newcommand{\TLWElvlz}{\textsf{TLWElvl0}\xspace}
\newcommand{\TLWElvlo}{\textsf{TLWElvl1}\xspace}
\newcommand{\TLWElvlt}{\textsf{TLWElvl2}\xspace}
\newcommand{\TRLWElvlo}{\textsf{TRLWElvl1}\xspace}
\newcommand{\TRLWElvlt}{\textsf{TRLWElvl2}\xspace}
\newcommand{\TRGSWlvlo}{\textsf{TRGSWlvl1}\xspace}
\newcommand{\TRGSWlvlt}{\textsf{TRGSWlvl2}\xspace}

\hyphenation{off-line}

\newif\ifdraft\draftfalse

\def\mkDraftFn#1#2{%
  \expandafter\def\csname #1\endcsname##1{%
  \ifdraft\todo[%
    color=#2%
  ]{#1: ##1}\fi}%
}
\mkDraftFn{KS}{red!40}
\mkDraftFn{MW}{blue!40}
\mkDraftFn{SB}{orange!40}
\mkDraftFn{RB}{green!40}
\mkDraftFn{KM}{gray!40}

\usepackage{ifthen}
\newcommand{\colorR}[1]{\textcolor{red}{#1}}
\newcommand{\pagelimitmarker}[1]{~\\ {\colorR{\ifthenelse{\thepage>#1}{\Huge Exceeding the page limit}{\huge Within the page limit}}}~\\ {\huge{\colorR{~~Page Limit\,\,\,\,\, = #1}}}~\\ {\huge{\colorR{~~Current Page = $\thepage$}}}}

\usepackage[]{lineno} %

\usepackage{paralist} %

\newenvironment{oneenumeration}
	{\begin{inparaenum}[1)]}
	{\end{inparaenum}}

\makeatletter
\def\orcidID#1{\smash{\href{https://orcid.org/#1}{\protect\raisebox{-1.25pt}{\protect\includegraphics{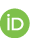}}}}}
\makeatother

\usepackage{subcaption}
\captionsetup{compatibility=false}
\begin{document}
\VerbatimFootnotes
\title{%
Oblivious Online Monitoring for Safety LTL Specification via Fully Homomorphic Encryption%
}
\titlerunning{%
Oblivious Online Monitoring for Safety LTL Specification via FHE}
\author{%
    Ryotaro Banno\inst{1}\orcidID{0000-0001-8610-9186}\and
    Kotaro Matsuoka\inst{1}\orcidID{0000-0002-6642-1276}\and
    Naoki Matsumoto\inst{1}\orcidID{0000-0002-4497-3459}\and
    Song Bian\inst{2}\orcidID{0000-0003-0467-6203}\and
    Masaki Waga\inst{1}\orcidID{0000-0001-9360-7490}\and
    Kohei Suenaga\inst{1}\orcidID{0000-0002-7466-8789}}
\authorrunning{R. Banno et al.}
\institute{%
    Kyoto University, Kyoto, Japan%
    \and
    Beihang University, Beijing, China%
}
\maketitle              %
\ifdraft\linenumbers{}
\fi
\begin{abstract}

In many Internet of Things (IoT) applications, data sensed by an IoT device are continuously sent to the server and monitored against a specification. Since the data often contain sensitive information, and the monitored specification is usually proprietary, both must be kept private from the other end.
We propose a protocol to conduct \emph{oblivious online monitoring}---online monitoring conducted without revealing the private information of each party to the other---against a safety LTL specification.
In our protocol, we first convert a safety LTL formula into a DFA and conduct online monitoring with the DFA. Based on \emph{fully homomorphic encryption (FHE)}, we propose two online algorithms (\ReverseStream{} and \BlockStream{}) to run a DFA obliviously. %
We prove the correctness and security of our entire protocol.
We also show the scalability of our algorithms theoretically and empirically.
Our case study shows that our algorithms are fast enough to monitor blood glucose levels online, 
demonstrating our protocol's practical relevance.

\end{abstract}
\section{Introduction}
\begin{figure}[t]
    \centering
    \includegraphics[width=0.85\textwidth]{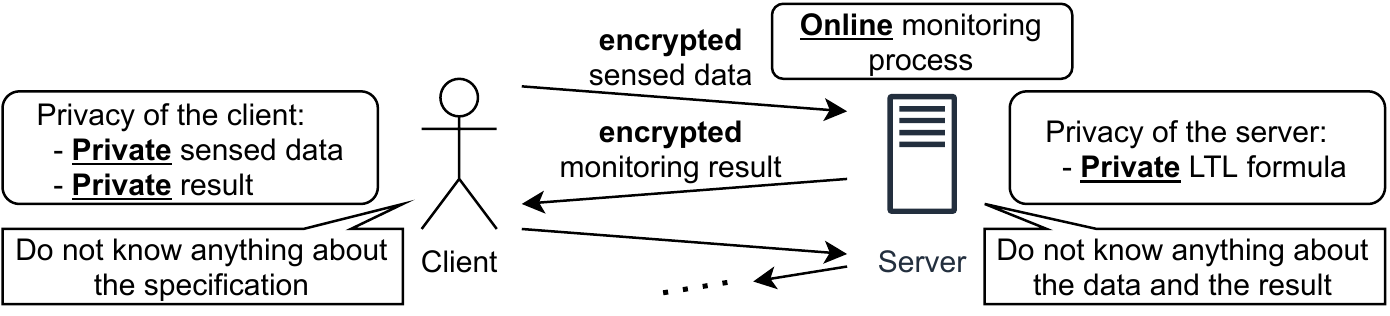}
    \caption{The proposed oblivious online LTL monitoring protocol.}
    \label{fig:motivation}
\end{figure}
Internet of Things (IoT)~\cite{atzori2010internet}
devices enable various service providers to monitor personal data of their users
and to provide useful feedback to the users. %
For example, a smart home system can save lives 
by raising an alarm when a gas stove is left on to prevent a fire. %
Such a system is realized by the
continuous monitoring of the data from the IoT devices in the house~\cite{bing2011design,DBLP:conf/rv/El-HokayemF18a}.
Another application of IoT devices is medical IoT (MIoT)~\cite{dimitrov2016medical}. 
In MIoT applications, biological information,
such as electrocardiograms or blood glucose levels, is monitored, and the user
is notified when an abnormality is detected (such as arrhythmia or hyperglycemia).

In many IoT applications\begin{ExtendedVersion}~\cite{DBLP:journals/fgcs/BottaDPP16}\end{ExtendedVersion}, monitoring must be conducted \emph{online}, i.e., a stream of sensed data is continuously monitored, and the violation of the monitoring specification must be reported even before the entire data are obtained.
In the smart home and MIoT applications, online 
monitoring is usually required, as continuous sensing
is crucial for the immediate notifications to emergency responders,
such as police officers or doctors, for the ongoing
abnormal situations.

As specifications generally contain proprietary information or sensitive
parameters learned from private data (e.g., with specification mining~\cite{DBLP:conf/kbse/LemieuxPB15}),
\emph{the specifications must be kept secret}.
One of the approaches for this privacy is to adopt the client-server model to the monitoring system. 
In such a model, the sensing device sends the collected data to a server, where the server performs the necessary analyses and returns the results to the device.
Since the client does not have access to the specification, the server's privacy is preserved.

However, the client-server model does \emph{not} inherently protect the client's privacy from the servers, as the data collected from and results sent back to the users are revealed to the servers in this model; that is to say, a user has to \emph{trust} the server.
This trust is problematic if, for example, the server itself intentionally or unintentionally leaks sensitive data of device users to an unauthorized party.
Thus, we argue that a monitoring procedure should achieve the following goals:
\begin{description}
    \item[Online monitoring.] The %
    monitored data need not be known beforehand. %
    \item[Client's Privacy.]  The server shall not know the monitored data and results. %
    \item[Server's Privacy.]  The client shall not know what property is monitored. %
\end{description}
We call a monitoring scheme with these properties \emph{oblivious online monitoring}.
By an oblivious online monitoring procedure,
\begin{oneenumeration}
 \item a user can get a monitoring result hiding her sensitive data and the result itself from a server, and
 \item a server can conduct online monitoring hiding the specification from the user.
\end{oneenumeration}

\begin{figure}[tbp]
 \tikzset{textblock/.style={rectangle, draw, text centered,node distance=0cm, text width=1cm}}
 \begin{subfigure}[b]{0.40\textwidth}
 \centering
 \begin{tikzpicture}[scale=0.65,every node/.style={transform shape}]
  \node at (0,0) (dots) {$\cdots$};
  \node[textblock, left=of dots] (d2) {$d_2$};
  \node[textblock, left=of d2] (d1) {$d_1$};
  \node[textblock, right=of dots] (dn1) {$d_{n-1}$};
  \node[textblock, right=of dn1] (dn) {$d_{n}$};

  \def\diffLine{0.55}
  \draw [thick,|-|] ($(d1.west) - (0, \diffLine)$)--($(dn.east) - (0, \diffLine)$);
  \draw[fill=white]
  ($(d2) + (0.2, -\diffLine)$) --
  ($(d2.east)  - (0, \diffLine + 0.2)$) --
  ($(dn1.west) - (0, \diffLine + 0.2)$) --
  ($(dn1.west) - (0, \diffLine - 0.2)$) --
  ($(d2.east)  - (0, \diffLine - 0.2)$) --
  ($(d2) + (0.2, -\diffLine)$)
  ;
  \node[] at ($(dots) - (0, \diffLine)$) {$M$};
 \end{tikzpicture}
  \caption{Algorithm \FHOffline{}.}
  \label{fig:outline_algorithm:offline}
 \vspace{1em}
 \begin{tikzpicture}[scale=0.65,every node/.style={transform shape}]
  \node at (0,0) (revdots) {$\cdots$};
  \node[textblock, right=of revdots] (revd2) {$d_{\color{red}\mathbf{2}}$};
  \node[textblock, right=of revd2] (revd1) {$d_{\color{red}\mathbf{1}}$};
  \node[textblock, left=of revdots] (revdn1) {$d_{\color{red}\mathbf{n-1}}$};
  \node[textblock, left=of revdn1] (revdn) {$d_{\color{red}\mathbf{n}}$};

  \def\diffLine{0.55}
  \draw [thick,|-|] ($(revd1.east) - (0, \diffLine)$)--($(revdn.west) - (0, \diffLine)$);
  \draw[fill=white]
  ($(revdn1) + (0.2, -\diffLine)$) --
  ($(revdn1.east) - (0, \diffLine - 0.2)$) --
  ($(revd2.west)  - (0, \diffLine - 0.2)$) --
  ($(revd2.west)  - (0, \diffLine + 0.2)$) --
  ($(revdn1.east) - (0, \diffLine + 0.2)$) --
  ($(revdn1) + (0.2, -\diffLine)$)
  ;
  \node[] at ($(revdots) - (0, \diffLine)$) {$\Rev{M}$};

  \node at ($(revdn.west) - (0, 2.5 * \diffLine)$) {\Large $=$};

  \node at ($(revdots.east) - (0, 2 * \diffLine)$) (dots) {$\cdots$};
  \node[textblock, left=of dots] (d2) {$d_2$};
  \node[textblock, left=of d2] (d1) {$d_1$};
  \node[textblock, right=of dots] (dn1) {$d_{n-1}$};
  \node[textblock, right=of dn1] (dn) {$d_{n}$};

  \def\diffLine{0.55}
  \draw [thick,|-|] ($(d1.west) - (0, \diffLine)$)--($(dn.east) - (0, \diffLine)$);
  \draw[fill=white]
  ($(dn1) + (-0.2, -\diffLine)$) --
  ($(dn1.west)  - (0, \diffLine + 0.2)$) --
  ($(d2.east) - (0, \diffLine + 0.2)$) --
  ($(d2.east) - (0, \diffLine - 0.2)$) --
  ($(dn1.west)  - (0, \diffLine - 0.2)$) --
  ($(dn1) + (-0.2, -\diffLine)$)
  ;
  \node[] at ($(dots) - (0, \diffLine)$) {$\Rev{M}$};
 \end{tikzpicture}
 \caption{Algorithm \ReverseStream{}, where $\Rev{M}$ is the reversed DFA of $M$.}
 \label{fig:outline_algorithm:reversed}
 \end{subfigure}
 \hfill
 \begin{subfigure}[b]{0.57\textwidth}
 \centering
 \begin{tikzpicture}[scale=0.65,every node/.style={transform shape}]
  \node at (0,0) (dots) {$\cdots$};
  \node[textblock, left=of dots] (d4) {$d_4$};
  \node[textblock, left=of d4] (d3) {$d_3$};
  \node[textblock, left=of d3] (d2) {$d_2$};
  \node[textblock, left=of d2] (d1) {$d_1$};
  \node[textblock, right=of dots] (dn1) {$d_{n-1}$};
  \node[textblock, right=of dn1] (dn) {$d_{n}$};

  \def\diffLine{0.6}
  \draw [thick,|-|] ($(d1.west) - (0, \diffLine)$)--($(d2.east) - (0, \diffLine)$);
  \draw[fill=white]
  ($(d1) + (0, -\diffLine)$) --
  ($(d1.east)  - (0.2, \diffLine + 0.2)$) --
  ($(d2.west) - (-0.3, \diffLine + 0.2)$) --
  ($(d2.west) - (-0.3, \diffLine - 0.2)$) --
  ($(d1.east)  - (0.2, \diffLine - 0.2)$) --
  ($(d1) + (0, -\diffLine)$)
  ;
  \node[] at ($(d1.east) - (0, \diffLine)$) {$M$};
  \draw [dashed,thick] ($(d1.west) - (0, \diffLine + 0.2)$) edge[bend right] node[below] {$B$} ($(d2.east) - (0, \diffLine + 0.2)$);

  \def\diffLineSecond{2 * \diffLine}
  \draw [thick,|-|] ($(d3.west) - (0, \diffLineSecond)$)--($(d4.east) - (0, \diffLineSecond)$);
  \draw[fill=white]
  ($(d3) + (0, -\diffLineSecond)$) --
  ($(d3.east)  - (0.2, \diffLineSecond + 0.2)$) --
  ($(d4.west) - (-0.3, \diffLineSecond + 0.2)$) --
  ($(d4.west) - (-0.3, \diffLineSecond - 0.2)$) --
  ($(d3.east)  - (0.2, \diffLineSecond - 0.2)$) --
  ($(d3) + (0, -\diffLineSecond)$)
  ;
  \node[] at ($(d3.east) - (0, \diffLineSecond)$) {$M$};
  \draw [dashed,thick] ($(d3.west) - (0, \diffLineSecond + 0.2)$) edge[bend right] node[below] {$B$} ($(d4.east) - (0, \diffLineSecond + 0.2)$);

  \node at ($(0,0) - (0, 2.5 * \diffLine)$) (dots) {$\ddots$};

  \def\diffLineThird{3.5 * \diffLine}
  \draw [thick,|-|] ($(dn1.west) - (0, \diffLineThird)$)--($(dn.east) - (0, \diffLineThird)$);
  \draw[fill=white]
  ($(dn1) + (0, -\diffLineThird)$) --
  ($(dn1.east)  - (0.2, \diffLineThird + 0.2)$) --
  ($(dn.west) - (-0.3, \diffLineThird + 0.2)$) --
  ($(dn.west) - (-0.3, \diffLineThird - 0.2)$) --
  ($(dn1.east)  - (0.2, \diffLineThird - 0.2)$) --
  ($(dn1) + (0, -\diffLineThird)$)
  ;
  \node[] at ($(dn1.east) - (0, \diffLineThird)$) {$M$};
  \draw [dashed,thick] ($(dn1.west) - (0, 3.5 * \diffLine + 0.2)$) edge[bend right] node[below] {$B$} ($(dn.east) - (0, 3.5* \diffLine + 0.2)$);
 \end{tikzpicture}
  \caption{Algorithm \BlockStream{} with block size $B = 2$. Each block of length $B$ is consumed with a variant of \FHOffline{}. The intermediate result at each block is used in the consumption of the next block.}
  \label{fig:outline_algorithm:bbs}
 \end{subfigure}
 \caption{How our algorithms consume the data $d_1, d_2,\dots,d_n$ with the DFA $M$.}
 \label{fig:outline_algorithm}
\end{figure}
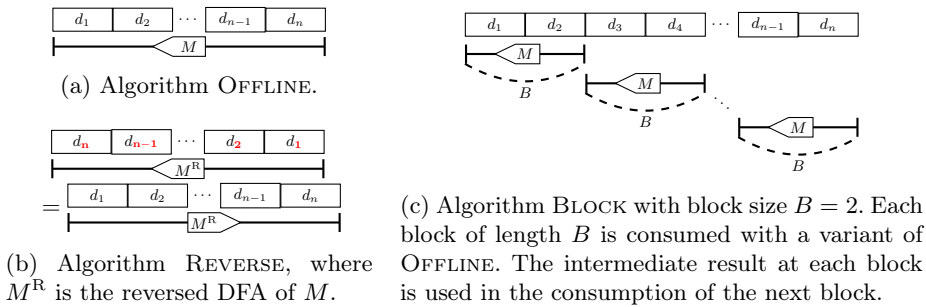
\paragraph{Contribution.}

In this paper, we propose a novel protocol (\cref{fig:motivation}) for oblivious online monitoring against a specification in \emph{linear temporal logic (LTL)}~\cite{DBLP:conf/focs/Pnueli77}. %
More precisely, we use a \emph{safety LTL formula}~\cite{DBLP:journals/fmsd/KupfermanV01} as a specification, which can be translated to a deterministic finite automaton (DFA)~\cite{DBLP:journals/fmsd/TabakovRV12}.
In our protocol, we first convert a safety LTL formula into a DFA and conduct online monitoring with the DFA.\@
For online and oblivious execution of a DFA, we propose two algorithms %
 based on \emph{fully homomorphic encryption} (FHE).
FHE allows us to evaluate an arbitrary function over ciphertexts, and there is an FHE-based algorithm to evaluate a DFA obliviously~\cite{DBLP:journals/joc/ChillottiGGI20}.
However, this algorithm
is limited to \emph{leveled} homomorphic, i.e., the FHE parameters are dependent on the number of the monitored ciphertexts and thus not applicable to online monitoring.

In this work,
we first present a \emph{fully} homomorphic \emph{offline} DFA evaluation algorithm (\FHOffline) by extending the leveled homomorphic algorithm in~\cite{DBLP:journals/joc/ChillottiGGI20}.
Although we can remove the parameter dependence using this method, \FHOffline consumes the ciphertexts  from back to front (\cref{fig:outline_algorithm:offline}). As a result, \FHOffline{} is still limited to offline usage only. %
To truly enable online monitoring, we propose two new algorithms based on \FHOffline: \ReverseStream{} and \BlockStream{}.
In \ReverseStream{}, we \emph{reverse} the DFA and apply \FHOffline{} to the reversed DFA (\cref{fig:outline_algorithm:reversed}).\@
In \BlockStream{}, we split the monitored ciphertexts into fixed-length \emph{blocks} and process each block sequentially with \FHOffline{} (\cref{fig:outline_algorithm:bbs}).
We prove that both of the algorithms have \emph{linear} time complexity and \emph{constant} space complexity to the length of the monitored ciphertexts, which guarantees the scalability of our entire protocol.

On top of our online algorithms, we propose a protocol for oblivious online LTL monitoring.
We assume that the client is \emph{malicious}, i.e., the client can deviate arbitrarily
from the protocol, while the server is \emph{honest-but-curious}, i.e.,
the server honestly follows the protocol but tries to learn the client's private data
by exploiting the obtained information.
We show that the privacy of both parties can be protected under the standard IND-CPA security of FHE schemes with the addition
of \emph{shielded randomness leakage} (SRL) 
security~\cite{DBLP:journals/iacr/BrakerskiDGM20a,DBLP:journals/iacr/GayP20}.

We implemented our algorithms for DFA evaluation in C++20
and evaluated their performance.
Our experiment results confirm the scalability of our algorithms.
Moreover, through a case study on blood glucose levels monitoring, we also show that our algorithms run fast enough for online monitoring, i.e., our algorithms are faster than the sampling interval of the current commercial devices that samples glucose levels.

Our contributions are summarized as follows:
\begin{itemize}
    \item We propose two \emph{online} algorithms to run a DFA obliviously.
    \item We propose the first protocol for oblivious online LTL monitoring. 
    \item We proved the correctness and security of our protocol.
    \item Our experiments show the scalability and practicality of our algorithms. %
\end{itemize}

\noindent{\textit{Related work.}}
There are various works on DFA execution without revealing the monitored data (See \cref{tbl:summary-related-work} for a summary). However, to our knowledge, there is no existing work achieving all of our three goals (i.e., \emph{online monitoring}, \emph{privacy of the client}, and \emph{privacy of the server}) simultaneously. Therefore, none of them is applicable to oblivious online LTL monitoring.

Homomorphic encryption, which we also utilize, has been used to run a DFA obliviously~\cite{DBLP:journals/joc/ChillottiGGI20,DBLP:conf/tcc/IshaiP07}. Among different homomorphic encryption schemes, our algorithm is based on the algorithm in~\cite{DBLP:journals/joc/ChillottiGGI20}. Although these algorithms guarantee the \emph{privacy of the client} and the \emph{privacy of the server}, all of the homomorphic-encryption-based algorithms are limited to offline DFA execution and do not achieve \emph{online monitoring}.
We note that the extension of~\cite{DBLP:journals/joc/ChillottiGGI20} for online DFA execution is one of our technical contributions.

In~\cite{DBLP:conf/emsoft/Abbas19}, the authors propose an LTL runtime verification algorithm without revealing the monitored data to the server. They propose both offline and online algorithms to run a DFA converted from a safety LTL formula. The main issue with their online algorithm is that the DFA running on the server must be revealed to the client, and the goal of \emph{privacy of the server} is not satisfied.

\emph{Oblivious DFA evaluation (ODFA)}~\cite{DBLP:conf/ccs/Troncoso-PastorizaKC07,DBLP:conf/dbsec/Frikken09,DBLP:conf/dbsec/BlantonA10,DBLP:conf/wpes/SasakawaHdATS14,DBLP:conf/ctrsa/MohasselNSS12,DBLP:conf/pkc/GennaroHS10} is a technique to run a DFA on a server while keeping the DFA secret to the server and the monitored data secret to the client. Although the structure of the DFA is not revealed to the client, the client has to know the number of the states. Consequently, the goal 
\emph{privacy of the server} is satisfied \emph{only partially}. 
Moreover, to the best of our knowledge, none of the ODFA-based algorithms support online DFA execution. Therefore, the goal \emph{online monitoring} is not satisfied.

\noindent{\textit{Organization.}}
The rest of the paper is organized as follows:
In \cref{sec:preliminaries}, we overview LTL monitoring (\cref{subsec:ltl-mon}),
the FHE scheme we use (\cref{subsec:tfhe}), and
the leveled homomorphic offline algorithm (\cref{subsec:leveled}).
Then, in \cref{sec:proposed-alg},
we explain our fully homomorphic offline algorithm (\FHOffline) and
two online algorithms (\ReverseStream and \BlockStream).
We describe the proposed protocol for oblivious online LTL monitoring 
in \cref{sec:oblivious-ltl-mon}.
After we discuss our experimental results in \cref{sec:experiment},
we conclude our paper in \cref{sec:conclusion}.

\newcommand{\CiteTroncoso}{\cite{DBLP:conf/ccs/Troncoso-PastorizaKC07}}
\newcommand{\CiteFrikken}{\cite{DBLP:conf/dbsec/Frikken09}}
\newcommand{\CiteBlanton}{\cite{DBLP:conf/dbsec/BlantonA10}}
\newcommand{\CiteSasakawa}{\cite{DBLP:conf/wpes/SasakawaHdATS14}}
\newcommand{\CiteMohassel}{\cite{DBLP:conf/ctrsa/MohasselNSS12}}
\newcommand{\CiteChillotti}{\cite{DBLP:journals/joc/ChillottiGGI20}}
\newcommand{\CiteAbbas}{\cite{DBLP:conf/emsoft/Abbas19}}
\newcommand{\CiteGennrao}{\cite{DBLP:conf/pkc/GennaroHS10}}
\newcommand{\CiteIshai}{\cite{DBLP:conf/tcc/IshaiP07}}
\newcommand{\Sgm}{|\Sigma|}
\newcommand{\lgSgm}{\log |\Sigma|}
\newcommand{\cmark}{{\color{green}\ding{51}}}%
\newcommand{\xmark}{{\color{red}\ding{55}}}%
\newcommand{\qmark}{\textbf{?}}

\begin{table}[tb]\centering
	\caption{Related work on DFA execution with \emph{privacy of the client}.
	}\label{tbl:summary-related-work}
	\scriptsize
	\begin{tabular}{l|c|c|c|c|c|c|c|c|c||c}\toprule
		Work                      & \CiteTroncoso & \CiteFrikken & \CiteBlanton & \CiteSasakawa & \CiteMohassel & \CiteGennrao & \CiteIshai & \CiteChillotti & \CiteAbbas & Ours      \\\midrule
		Support online monitoring& \xmark        & \xmark       & \xmark       & \xmark        & \xmark        & \xmark       & \xmark     & \xmark         & \cmark     & \cmark\\
		Private the client's monitored data             & \cmark        & \cmark       & \cmark       & \cmark        & \cmark        & \cmark       & \cmark     & \cmark         & \cmark     & \cmark    \\
		Private DFA, except for its number of the states  & \cmark        & \cmark       & \cmark       & \cmark        & \cmark        & \cmark       & \cmark     & \cmark         & \xmark     & \cmark    \\
		Private DFA's number of the states             & \xmark        & \xmark       & \xmark       & \xmark        & \xmark        & \xmark       & \cmark     & \cmark         & \xmark     & \cmark    \\
		Performance report        & \xmark        & \cmark       & \xmark       & \cmark        & \cmark        & \xmark       & \xmark     & \xmark         & \xmark     & \cmark    \\
		\bottomrule
	\end{tabular}%
\end{table}

\section{Preliminaries}\label{sec:preliminaries}

\noindent{\textit{Notations.}}
We denote the set of all nonnegative integers by $\N$, the set of all positive integers by $\N^+$, and the set $\{0,1\}$ by $\B$.
Let $X$ be a set.
We write $2^X$ for the powerset of $X$.
We write $X^*$ for the set of finite sequences of $X$ elements and $X^\omega$ for the set of infinite sequences of $X$ elements.
For $u \in X^\omega$, we write $u_i \in X$ for the $i$-th element ($0$-based) of $u$, $u_{i:j} \in X^*$ for the subsequence $u_i, u_{i+1}, \dots, u_{j}$ of $u$, and $u_{i:} \in X^\omega$ for the suffix of $u$ starting from its $i$-th element.
For $u \in X^*$ and $v \in X^* \cup X^\omega$, we write $u \cdot v$ for the concatenation of $u$ and $v$.

\noindent{\textit{DFA.}}
A deterministic finite automaton (DFA) is a 5-tuple $(Q, \Sigma, \delta, q_0, F)$,
where $Q$ is a finite set of states, $\Sigma$ is a finite set of alphabet,
$\delta\colon Q\times\Sigma\to Q$
is a transition function,
$q_0\in Q$ is an initial state, and $F\subseteq Q$
is a set of final states.
If the alphabet of a DFA is $\mathbb{B}$, we call it a \emph{binary} DFA.
For a state $q\in Q$ and a word $w=\sigma_1\sigma_2\dots\sigma_{n}$
we define $\delta(q,w)\coloneqq\delta(\dots\delta(\delta(q,\sigma_1),\sigma_2),\dots,\sigma_{n})$.
For a DFA $M$ and a word $w$, we write $M(w)\coloneqq 1$ if $M$ accepts $w$;
otherwise, $M(w) \coloneqq 0$.
We also abuse the above notations for nondeterministic finite automata (NFAs).

\subsection{LTL}\label{subsec:ltl-mon}

We use \emph{linear temporal logic (LTL)}~\cite{DBLP:conf/focs/Pnueli77} to specify the monitored properties.
The following BNF defines the syntax of LTL formulae:
$
  \phi,\psi ::= \top
  \mid p
  \mid \lnot\phi
  \mid \phi\land\psi
  \mid \mathsf{X}\phi
  \mid \phi\mathsf{U}\psi
$,
where 
$\phi$ and $\psi$ range over LTL formulae and
$p$ ranges over a set $\AP$ of atomic propositions.

An LTL formula asserts a property of $u \in (2^\AP)^\omega$.
The sequence $u$ expresses an execution trace of a system; $u_i$ is the set of the atomic propositions satisfied at the $i$-th time step.
Intuitively, $\top$ represents an always-true proposition;
$p$ asserts that $u_0$ contains $p$, and hence $p$ holds at the $0$-th step in $u$;
$\lnot\phi$ is the negation of $\phi$; and
$\phi \land \psi$ is the conjunction of $\phi$ and $\psi$.
The temporal proposition $\mathsf{X} \phi$ expresses that $\phi$ holds from the next step (i.e., $u_{1:}$);
$\phi \mathsf{U} \psi$ expresses that $\psi$ holds eventually and $\phi$ continues to hold until then.
We write $\bot$ for $\neg \top$;
$\phi \lor \psi$ for $\neg (\neg \phi \land \neg \psi)$; 
$\phi \implies \psi$ for $\neg\phi \lor \psi$;
$\mathsf{F}\phi$ for $\top \mathsf{U} \phi$;
$\mathsf{G}\phi$ for $\neg (\mathsf{F} \neg\phi)$;
$\mathsf{G}_{[n,m]}\phi$ for
$\overbrace{\X\dots\X}^{n\text{ occurrences of }\X}(\phi\land\overbrace{\X(\phi\land\X(\dots\land\X\phi))}^{(m-n)\text{ occ. of }\X})$; and
$\mathsf{F}_{[n,m]}\phi$ for
$\overbrace{\X\dots\X}^{n\text{ occ. of }\X}(\phi\lor\overbrace{\X(\phi\lor\X(\dots\lor\X\phi))}^{(m-n)\text{ occ. of }\X})$.

We formally define the semantics of LTL below.
Let $u \in (2^{\AP})^{\omega}$, $i\in\mathbb{N}$, and $\phi$ be an LTL formula.
We define the relation $u,i \models \phi$ as the least relation that satisfies the following:
\begin{gather*}
 u,i \models \top \qquad u, i \models p \defiff p \in u(i) \qquad u,i\models\lnot\phi \defiff u,i\not\models\phi \\
 u,i\models\phi\land\psi \defiff u,i\models\phi\text{ and }u,i\models\psi \qquad\quad
 u,i\models\mathsf{X}\phi \defiff u,i+1\models\phi\\
 u,i\models \phi \mathsf{U} \psi \defiff \text{
 \begin{tabular}{l}
  there exists  $j\ge i$  such that  $u,j\models\psi$ and, \\
  for any $k$, $i\le k\le j \implies u,k\models\phi$.
 \end{tabular}}
\end{gather*}
We write $u \models \phi$ for $u,0\models \phi$ and say $u$ \emph{satisfies} $\phi$.

In this paper,
we focus on \emph{safety}~\cite{DBLP:journals/fmsd/KupfermanV01} (i.e., nothing bad happens) fragment of LTL properties.
A finite sequence $w \in (2^{\AP})^{*}$ is a \emph{bad prefix} for an LTL formula $\phi$ if $w\cdot v\not\models \phi$ holds for any $v\in (2^\AP)^\omega$. %
For any bad prefix $w$, we cannot extend $w$ to an infinite word that satisfies $\phi$.
An LTL formula $\phi$ is a
\emph{safety}
LTL formula if
for any $w\in (2^\AP)^\omega$ satisfying $w\not\models\phi$, $w$ has a bad prefix for $\phi$.

A \emph{safety monitor} (or simply a \emph{monitor}) is a procedure that takes $w \in (2^\AP)^\omega$ and a safety LTL formula $\phi$ and generates an alert if $w \not\models \phi$.
From the definition of safety LTL, it suffices for a monitor to detect a bad prefix of $\phi$.
It is known that, for any safety LTL formula $\phi$, we can construct a DFA $M_{\phi}$ recognizing the set of the bad prefixes of $\phi$~\cite{DBLP:journals/fmsd/TabakovRV12}, which can be used as a monitor.

\begin{table}[tb]
    \centering
    \caption{Summary of TFHE ciphertexts, where $N$ is a parameter of TFHE.}
    \label{tbl:summary-of-tfhe-ciphertexts}
    \scriptsize
    \begin{tabular}{c|c|c|c}\toprule
         Ciphertext Kind & Notation in this paper& Plaintext Message & Conversion from \TRLWE \\\midrule
        \TLWE  & $c$          & a Boolean value $b\in\B$ & \SampleExtract{} (fast) \\
        \TRLWE & $\mathbf{c}$ & a Boolean vector $v \in \B^N$ & ------------ \\
        \TRGSW & $d$          & a Boolean value $b \in \B$ & \begin{tabular}{@{}c@{}}
          \SampleExtract{} and     \\
          \CircuitBootstrapping{} (slow)
        \end{tabular} \\\bottomrule
    \end{tabular}
\end{table}

\subsection{Torus Fully Homomorphic Encryption}\label{subsec:tfhe}

Homomorphic encryption (HE) is a form of encryption that enables us to apply operations to encrypted values \emph{without decrypting them}.
In particular, a type of HE, called Fully HE (FHE), allows us to evaluate arbitrary functions over encrypted data~\cite{DBLP:journals/iacr/FanV12,DBLP:conf/stoc/Gentry09,DBLP:conf/crypto/GentrySW13,DBLP:conf/innovations/BrakerskiGV12}.
We use an instance of FHE called TFHE~\cite{DBLP:journals/joc/ChillottiGGI20} in this work.
We briefly summarize TFHE below; see~\cite{DBLP:journals/joc/ChillottiGGI20} for a detailed exposition.

We are concerned with the following two-party secure computation, where the involved parties are a client (called Alice) and a server (called Bob): %
\begin{oneenumeration}
 \item Alice generates the keys used during computation;
 \item Alice encrypts her plaintext messages into ciphertexts with her keys;
 \item Alice sends the ciphertexts to Bob;
 \item Bob conducts computation over the received ciphertexts and obtains the encrypted result \emph{without decryption};
 \item Bob sends the encrypted results to Alice;
 \item Alice decrypts the received results and obtains the results in plaintext.
\end{oneenumeration}

\subsubsection{Keys.}\label{subsec:keys}
There are three types of keys in TFHE: \emph{secret key} $\SK$, \emph{public key} $\PK$, and \emph{bootstrapping key} $\BK$.
All of them are generated by Alice.
$\PK$ is used to encrypt plaintext messages into ciphertexts, and
$\SK$ is used to decrypt ciphertexts into plaintexts.
Alice keeps $\SK$ private, i.e., the key is known only to herself but not to Bob.
In contrast, $\PK$ is public and also known to Bob.
$\BK$ is generated from $\SK$ and can be safely shared
with Bob without revealing $\SK$.
$\BK$ allows Bob to evaluate the homomorphic operations (defined later) %
over the ciphertext.

\subsubsection{Ciphertexts.}
Using the public key, Alice can generate three kinds of ciphertexts (\cref{tbl:summary-of-tfhe-ciphertexts}):
\TLWE (Torus Learning With Errors),
\TRLWE (Torus Ring Learning With Errors),
and \TRGSW (Torus Ring Gentry-Sahai-Waters).
Homomorphic operations provided by TFHE %
are defined over each of the specific ciphertexts.
We note that different ciphertexts have different data structures, and their conversion can be time-consuming.
\cref{tbl:summary-of-tfhe-ciphertexts} shows one such example.

In TFHE, different types of ciphertexts represent different plaintext messages.
A \TLWE ciphertext represents a Boolean value.
In contrast, \TRLWE represents a vector of Boolean values of length $N$,
where $N$ is a TFHE parameter.
We can regard a \TRLWE ciphertext as a vector of \TLWE ciphertexts,
and the conversion between a \TRLWE ciphertext and a \TLWE one
is relatively easy. %
A \TRGSW ciphertext also represents a Boolean value, but its data structure is quite different from \TLWE, and
the conversion from \TLWE to
\TRGSW %
is slow. %

TFHE provides different encryption and decryption functions for each type of ciphertext.
We write $\Enc(x)$ for a ciphertext of a plaintext $x$; $\Dec(c)$ for the plaintext message for the ciphertext $c$.
We abuse these notations for all three types of ciphertexts.

Besides, TFHE supports \emph{trivial samples} of \TRLWE.
A trivial sample of \TRLWE has the same data structure as a \TRLWE ciphertext
but is \emph{not} encrypted, i.e., anyone can tell the plaintext message
represented by the trivial sample.
We denote by $\Trivial(n)$ a trivial sample of \TRLWE
whose plaintext message is $(b_1,b_2,\dots,b_{N})$, where each $b_{i}$ is the
$i$-th bit in the binary representation of $n$\begin{ExtendedVersion}, i.e., $n = \sum_{i=1}^{N} b_i 2^{i-1}$\end{ExtendedVersion}.

\subsubsection{Homomorphic Operations.}\label{subsec:functions_over_ciphertexts}
TFHE provides \emph{homomorphic operations}, i.e.,
operations over ciphertexts without decryption.
Among the operators supported by TFHE~\cite{DBLP:journals/joc/ChillottiGGI20},
we use the following ones.

\begin{description}
\item[$\CMux(d,\mathbf{c_{\text{true}}},\mathbf{c_{\text{false}}})\text{\normalfont{} : $\TRGSW\times\TRLWE\times\TRLWE\to\TRLWE$}$]\mbox{}\\
    Given a \TRGSW ciphertext $d$ and \TRLWE ciphertexts $\mathbf{c}_{\text{true}}, \mathbf{c}_{\text{false}}$,
    \CMux{} outputs a \TRLWE ciphertext $\mathbf{c}_{\text{result}}$ such that
    $\Dec(\mathbf{c}_{\text{result}})=\Dec(\mathbf{c}_\text{true})$ if $\Dec(d)=1$, and otherwise,
    $\Dec(\mathbf{c}_{\text{result}})=\Dec(\mathbf{c}_\text{false})$.

\item[$\LookUp(\{\mathbf{c}_i\}_{i=1}^{2^n}, \{d_i\}_{i=1}^n)\text{\normalfont{} : $(\TRLWE)^{2^n}\times(\TRGSW)^n\to\TRLWE$}$]\mbox{}\\
    Given \TRLWE ciphertexts $\mathbf{c}_1,\mathbf{c}_2,\dots,\mathbf{c}_{2^n}$ and \TRGSW ciphertexts $d_1,d_2,\dots,\allowbreak d_{n}$,
    \LookUp{} outputs a \TRLWE ciphertext $\mathbf{c}$ such that $\Dec(\mathbf{c}) = \Dec(\mathbf{c}_k)$ and $k = \sum_{i=1}^{n} 2^{i-1}\times\Dec(d_i)$.

\item[$\SampleExtract(k, \mathbf{c})\text{\normalfont{} : $\mathbb{N}\times\TRLWE\to\TLWE$}$]\mbox{}\\
    Let $\Dec(\mathbf{c}) = (b_1,b_2,\dots,b_{N})$.
    Given $k < N$ and a \TRLWE ciphertext $\mathbf{c}$,
    \SampleExtract outputs a \TLWE ciphertext $c$ where $\Dec(c)=b_{k+1}$.
\end{description}

Intuitively, \CMux{} can be regarded as a multiplexer over \TRLWE ciphertexts with \TRGSW selector input.
The operation \LookUp{} regards $\mathbf{c}_1,\mathbf{c}_2,\dots,\mathbf{c}_{2^n}$
as encrypted entries composing a LookUp Table (LUT) of depth $n$ and $d_1,d_2,\dots,d_{n}$
as inputs to the LUT. Its output is the entry selected by the LUT.
\LookUp{} is constructed by $2^n-1$ \CMux{} arranged in a tree of depth $n$.
\SampleExtract{} outputs the $k$-th element of $\mathbf{c}$ as \TLWE.
Notice that all these operations work over ciphertexts without decrypting them.

\noindent\textbf{Noise and Operations for Noise Reduction.}
In generating a TFHE ciphertext, we ensure its security by adding some random numbers called \emph{noise}.
An application of a TFHE operation adds noise to its output ciphertext;
if the noise in a ciphertext becomes too large, the TFHE ciphertext cannot be correctly decrypted.
There is a special type of operation called \emph{bootstrapping}\footnote{%
Note that bootstrapping here has nothing to do with bootstrapping in statistics.%
}~\cite{DBLP:conf/stoc/Gentry09},
which reduces the noise of a TFHE ciphertext.

\begin{description}
\item[$\Bootstrapping_\BK(c)\text{\normalfont : $\TLWE \to \TRLWE$}$]\mbox{}\\
    Given a bootstrapping key $\BK$ and a \TLWE ciphertext $c$,
    \Bootstrapping outputs a \TRLWE ciphertext $\mathbf{c}$ where $\Dec(\mathbf{c}) = (b_1, b_2, \dots, b_{N})$ and $b_1 = \Dec(c)$.
    Moreover, the noise of $\mathbf{c}$ becomes a constant that is determined by the parameters of TFHE and is independent of $c$.
    
\item[$\CircuitBootstrapping_\BK(c)\text{\normalfont : $\TLWE \to \TRGSW$}$]\mbox{}\\
    Given a bootstrapping key $\BK$ and a \TLWE ciphertext $c$,
    \CircuitBootstrapping outputs a \TRGSW ciphertext $d$ where $\Dec(d) = \Dec(c)$.
    The noise of $d$ becomes a constant that is determined by the parameters of TFHE and is independent of $c$.
\end{description}

These bootstrapping operations are used to keep the noise of a TFHE ciphertext small enough to be correctly decrypted.
\Bootstrapping and \CircuitBootstrapping are
almost two and three orders of magnitude slower than \CMux{},
respectively~\cite{DBLP:journals/joc/ChillottiGGI20}.

\subsubsection{Parameters for TFHE.}
There are many parameters for TFHE, such as
the length $N$ of the message of a \TRLWE ciphertext and
the standard deviation of the probability distribution from which a noise is taken.
Certain properties of TFHE depend on these parameters.
These properties include the security level of TFHE,
the number of TFHE operations that can be applied without bootstrapping ensuring correct decryption,
and the time and the space complexity of each operation.
The complete list of TFHE parameters is presented in \crefAppendix{TFHE_parameters}.

We remark that we need to determine the TFHE parameters
\emph{before} performing any TFHE operation.
Therefore, we need to know
the number of applications of homomorphic operations without bootstrapping \emph{in advance}, i.e.,
the homomorphic circuit depth must be determined \emph{a priori}.

\subsection{Leveled Homomorphic Offline Algorithm}%
\label{subsec:leveled}

\begin{algorithm}[tb]\scriptsize
\Input{A binary DFA $M=(Q, \Sigma=\B, \delta, q_0, F)$ and \TRGSW monitored ciphertexts $d_1, d_2,\dots, d_{n}$}
\Output{A \TLWE ciphertext $c$ satisfying $\Dec(c) = M(\Dec(d_1)\Dec(d_2)\dots\Dec(d_{n}))$}
\For{$q \in Q$}{
    \label{line:unf-off:init0}
    $\mathbf{c}_{n,q} \gets q \in F\text { ? }\textsc{Trivial}(1)\text{ : }\textsc{Trivial}(0)$
    \tcp*{Initialize each $\mathbf{c}_{n,q}$}
    \label{line:unf-off:init1}
}
\For{$i=n,n-1,\dots,1$}{
    \label{line:unf-off:for00}
    \For{$q \in Q$ such that $q$ is reachable from $q_0$ by $(i-1)$ transitions} {
        $\mathbf{c}_{i-1,q} \gets \CMux{}(d_i, \mathbf{c}_{i,\delta(q, 1)}, \mathbf{c}_{i,\delta(q, 0)})$
    }
    \label{line:unf-off:for01}
}
$c \gets \SampleExtract{}(0,\mathbf{c}_{0,q_0})$\;\label{line:unf-off:sampleextract}
\Return $c$
\caption{The leveled homomorphic offline algorithm~\cite{DBLP:journals/joc/ChillottiGGI20}.}\label{alg:unfixed-parameter-offline}
\end{algorithm}

\begin{sloppypar}
Chillotti et al.~\cite{DBLP:journals/joc/ChillottiGGI20} proposed an \emph{offline} algorithm to evaluate a DFA over TFHE ciphertexts (\cref{alg:unfixed-parameter-offline}).
Given a DFA $M$ and \TRGSW ciphertexts $d_1, d_2, \dots, d_{n}$,
\cref{alg:unfixed-parameter-offline} returns a \TLWE ciphertext $c$ satisfying
$\Dec(c) = M(\Dec(d_1)\Dec(d_2)\dots\Dec(d_{n}))$.
For simplicity, for a state $q$ of $M$,
we write $M^i(q)$ for $M(q,\Dec(d_i)\Dec(d_{i+1})\dots\Dec(d_n))$.
\end{sloppypar}

In \cref{alg:unfixed-parameter-offline},
we use a \TRLWE ciphertext $\mathbf{c}_{i,q}$ whose first element represents $M^{i+1}(q)$, i.e., whether we reach a final state by reading $\Dec(d_{i+1})\Dec(d_{i+2})\dots\Dec(d_n)$ from $q$.
We abuse this notation for $i=n$, i.e., the first element of $\mathbf{c}_{n,q}$ represents if $q \in F$.
In \cref{line:unf-off:init0,line:unf-off:init1}, we initialize $\mathbf{c}_{n,q}$;
For each $q\in Q$, we let $\mathbf{c}_{n,q}$ be $\Trivial(1)$ if $q \in F$; otherwise, we let $\mathbf{c}_{n,q}$ be $\Trivial(0)$.
In \crefrange{line:unf-off:for00}{line:unf-off:for01},
we construct $\mathbf{c}_{i-1,q}$ inductively by feeding each monitored ciphertext $d_i$ to \CMux{} from tail to head.
Here, $\mathbf{c}_{i-1,q}$ represents $M^i(q)$ because of
$M^{i}(q) = M^{i+1}(\delta(q,\Dec(d_i)))$.
We note that for the efficiency, we only construct $\mathbf{c}_{i-1,q}$ for the states reachable from $q_0$ by $i-1$ transitions.
In \cref{line:unf-off:sampleextract}, we extract the first element of $\mathbf{c}_{0,q_0}$,
which represents $M^1(q_0)$, i.e., $M(\Dec(d_1)\Dec(d_2)\dots\Dec(d_n))$.

\begin{theorem}[{Correctness~\cite[Thm.~5.4]{DBLP:journals/joc/ChillottiGGI20}}]\label{thm:leveled-offline}
    Given a binary DFA $M$ and \TRGSW ciphertexts $d_1,d_2,\dots,d_n$,
    if $c$ in \cref{alg:unfixed-parameter-offline} can be correctly decrypted,
    \cref{alg:unfixed-parameter-offline} outputs $c$ satisfying $\Dec(c)=M(\Dec(d_1)\Dec(d_2)\dots\Dec(d_n))$.
    \qed{}
\end{theorem}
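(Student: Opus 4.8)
The plan is to prove, by downward induction on $i$, the invariant that for every index $i$ and every state $q$ to which \cref{alg:unfixed-parameter-offline} assigns a ciphertext $\mathbf{c}_{i,q}$, the first element of $\Dec(\mathbf{c}_{i,q})$ equals $M^{i+1}(q)$. Here I adopt the paper's convention extending $M^{i}(q)$ to $i=n+1$ by letting $M^{n+1}(q)$ be $1$ iff $q\in F$ (reading the empty word from $q$ accepts exactly at the final states). Granting the invariant, the theorem is immediate: the first element of $\Dec(\mathbf{c}_{0,q_0})$ is $M^{1}(q_0)=M(q_0,\Dec(d_1)\Dec(d_2)\dots\Dec(d_n))=M(\Dec(d_1)\Dec(d_2)\dots\Dec(d_n))$, and since $\SampleExtract(0,\cdot)$ returns a \TLWE{} ciphertext whose plaintext is precisely that first element (by its specification in \cref{subsec:functions_over_ciphertexts}), the returned $c$ satisfies $\Dec(c)=M(\Dec(d_1)\Dec(d_2)\dots\Dec(d_n))$.

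For the base case $i=n$, \crefrange{line:unf-off:init0}{line:unf-off:init1} set $\mathbf{c}_{n,q}$ to $\Trivial(1)$ when $q\in F$ and to $\Trivial(0)$ otherwise, whose first elements are $1$ and $0$ respectively; these match $M^{n+1}(q)$ under the empty-word convention. For the inductive step, I assume the invariant at index $i$ and take a state $q$ reachable from $q_0$ by $i-1$ transitions, for which the algorithm computes $\mathbf{c}_{i-1,q}=\CMux(d_i,\mathbf{c}_{i,\delta(q,1)},\mathbf{c}_{i,\delta(q,0)})$. Writing $b=\Dec(d_i)$, the specification of \CMux{} yields $\Dec(\mathbf{c}_{i-1,q})=\Dec(\mathbf{c}_{i,\delta(q,b)})$, so the first element of $\Dec(\mathbf{c}_{i-1,q})$ equals the first element of $\Dec(\mathbf{c}_{i,\delta(q,b)})$, which by the induction hypothesis is $M^{i+1}(\delta(q,b))$. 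The remaining ingredient is the purely automata-theoretic identity $M^{i}(q)=M^{i+1}(\delta(q,\Dec(d_i)))$, which is immediate from the definition of $\delta$ on words: reading $\Dec(d_i)\Dec(d_{i+1})\dots\Dec(d_n)$ from $q$ is the same as taking the transition on $\Dec(d_i)$ and then reading $\Dec(d_{i+1})\dots\Dec(d_n)$. Combining the two equalities shows the first element of $\Dec(\mathbf{c}_{i-1,q})$ is $M^{i}(q)=M^{(i-1)+1}(q)$, closing the induction.

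Two points require care but are not deep, and together they locate where the real obstacle lies. First, the induction must be well-founded: whenever $\mathbf{c}_{i-1,q}$ is computed for a state $q$ reachable by $i-1$ transitions, the referenced ciphertexts $\mathbf{c}_{i,\delta(q,0)}$ and $\mathbf{c}_{i,\delta(q,1)}$ must already exist. This holds because $\delta(q,b)$ is reachable by $i$ transitions and the loop processes indices in decreasing order, so level $i$ is completed before level $i-1$; at the first iteration $i=n$, all states are covered by the initialization. Second, the statement is conditioned on $c$ being correctly decryptable, and this hypothesis is exactly what licenses the plaintext-level reasoning above: the specifications of \CMux{} and \SampleExtract{} describe their effect on decrypted messages only while the accumulated noise stays below the decryption threshold, so the entire argument proceeds at the plaintext level under that standing assumption. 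Consequently, the only genuine difficulty is bookkeeping---tracking which $\mathbf{c}_{i,q}$ have been assigned and aligning the index shift between $M^{i+1}$ and $M^{i}$---rather than any cryptographic noise estimate, since the noise analysis has deliberately been factored out into the decryptability hypothesis.
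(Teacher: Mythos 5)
Your proof is correct and follows essentially the same route as the paper: the paper defers the formal proof to the cited work but its exposition of \cref{alg:unfixed-parameter-offline} rests on exactly your invariant---that the first element of $\Dec(\mathbf{c}_{i,q})$ represents $M^{i+1}(q)$, propagated backwards via the identity $M^{i}(q)=M^{i+1}(\delta(q,\Dec(d_i)))$ and the \CMux{} specification, with the decryptability hypothesis licensing all plaintext-level reasoning. Your additional care about well-foundedness of the reachable-state bookkeeping is a sound refinement of the same argument, not a different one.
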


\noindent{\textbf{Complexity Analysis.}}
The time complexity of \cref{alg:unfixed-parameter-offline} is determined by the number of applications of
\CMux{}, which is $O(n|Q|)$.
Its space complexity is $O(|Q|)$ because we can use two sets of $|Q|$ \TRLWE ciphertexts alternately
for $\mathbf{c}_{2j-1,q}$ and $\mathbf{c}_{2j,q}$ (for $j\in\N^+$).

\noindent{\textbf{Shortcomings of \cref{alg:unfixed-parameter-offline}.}}
We cannot use \cref{alg:unfixed-parameter-offline} under an \emph{online} setting due to two reasons.
Firstly, \cref{alg:unfixed-parameter-offline} is a \emph{leveled} homomorphic algorithm,
i.e., the maximum length of the ciphertexts that \cref{alg:unfixed-parameter-offline} can handle is determined by TFHE parameters.
This is because \cref{alg:unfixed-parameter-offline} does not use \Bootstrapping, and if the monitored ciphertexts are too long, the result $c$ cannot be correctly decrypted due to the noise.
This is critical in an online setting because
we do not know the length $n$ of the monitored ciphertexts in advance, and
we cannot determine such parameters appropriately.

Secondly, \cref{alg:unfixed-parameter-offline} consumes the monitored ciphertext from back to front, 
i.e., the last ciphertext $d_{n}$ is used in the beginning, and $d_{1}$ is used in the end.
Thus, we cannot start \cref{alg:unfixed-parameter-offline} before the last input is given.

\section{Online Algorithms for Running DFA Obliviously}\label{sec:proposed-alg}
In this section, we propose two online algorithms that run a DFA obliviously.
As a preparation for these online algorithms, we also introduce a fully homomorphic offline algorithm based on \cref{alg:unfixed-parameter-offline}.

\subsection{Preparation: Fully Homomorphic Offline Algorithm (\FHOffline)}
\begin{algorithm}[tb]\scriptsize
\Input{A binary DFA $M=(Q, \Sigma=\B, \delta, q_0, F)$, \TRGSW monitored ciphertexts $d_1, d_2, \dots, d_{n}$, a bootstrapping key $\BK$, and $\BootstrappingInterval\in\mathbb{N}^+$}
\Output{A \TLWE ciphertext $c$ satisfying $\Dec(c) = M(\Dec(d_1)\Dec(d_2)\dots\Dec(d_{n}))$}
\For{$q \in Q$}{
    $\mathbf{c}_{n,q} \gets q \in F\text { ? }\textsc{Trivial}(1)\text{ : }\textsc{Trivial}(0)$\;
}
\For{$i=n,n-1,\dots,1$}{\label{line:offline:for0}
    \For{$q \in Q$ such that $q$ is reachable from $q_0$ by $(i-1)$ transitions} {
        $\mathbf{c}_{i-1,q} \gets \CMux{}(d_i, \mathbf{c}_{i,\delta(q, 1)}, \mathbf{c}_{i,\delta(q, 0)})$
    }
    \color{red}{%
    \If{$(n-i+1)\mod \BootstrappingInterval = 0$}{
        \label{line:offline:bs0}
        \For{$q\in Q$ such that reachable from $q_0$ by $(i-1)$ transitions}{
            $c_{i-1,q} \gets \SampleExtract(0,\mathbf{c}_{i-1,q})$\;
            $\mathbf{c}_{i-1,q} \gets \Bootstrapping_\BK(c_{i-1,q})$
            \label{line:offline:bs1}\label{line:offline:for1}
        }
    }}
}
$c \gets \SampleExtract{}(0,\mathbf{c}_{0,q_0})$\;
\Return $c$

\caption{Our fully homomorphic offline algorithm (\FHOffline).}\label{alg:fixed-parameter-offline}\label{alg:offline}
\end{algorithm}

As preparation for introducing an algorithm that can run a DFA under an online setting, we enhance \cref{alg:unfixed-parameter-offline} so that we can monitor a sequence of ciphertexts whose length is unknown \emph{a priori}.
\cref{alg:fixed-parameter-offline} shows our \emph{fully homomorphic} offline algorithm (\FHOffline), which does not require TFHE parameters to depend on the length of the monitored ciphertexts.
The key difference lies in \crefrange{line:offline:bs0}{line:offline:bs1} (the red lines) of \cref{alg:fixed-parameter-offline}.
Here, for every $\BootstrappingInterval$ consumption of the monitored ciphertexts, we reduce the noise by applying \Bootstrapping{} to the ciphertext $\mathbf{c}_{i,j}$ representing a state of the DFA.
Since the amount of the noise accumulated in $\mathbf{c}_{i,j}$ is determined only by the number of the processed ciphertexts,
we can keep the noise levels of $\mathbf{c}_{i,j}$ low and ensure that the monitoring result $c$ is correctly decrypted.
Therefore, by using \cref{alg:fixed-parameter-offline},
we can monitor an arbitrarily long sequence of ciphertexts as long as the interval $\BootstrappingInterval$ is properly chosen according to the TFHE parameters.
We note that we still cannot use \cref{alg:fixed-parameter-offline} for online monitoring because it consumes the monitored ciphertexts from back to front.

\subsection{Online Algorithm 1: \ReverseStream}\label{subsec:reversed}
\begin{algorithm}[tb]\scriptsize
\Input{A binary DFA $M$, \TRGSW monitored ciphertexts $d_1, d_2, d_3, \dots, d_n$, a bootstrapping key $\BK$, and $\BootstrappingInterval\in\mathbb{N}^+$}
\Output{For every $i\in \{1,2,\dots,n\}$, a \TLWE ciphertext $c_i$ satisfying $\Dec(c_i) = M(\Dec(d_1)\Dec(d_2)\dots\Dec(d_i))$}
let $\Rev{M}=(\Rev{Q},\B,\Rev{\delta},\Rev{q}_0,\Rev{F})$ be the minimum reversed DFA of $M$
\;\label{line:reversed:reverse}
\For{$\Rev{q}\in\Rev{Q}$}{
    $\mathbf{c}_{0,\Rev{q}} \gets \Rev{q}\in\Rev{F}\text{ ? }\Trivial(1)\text{ : }\Trivial(0)$\;
}
\For{$i = 1,2,\dots,n$}{\label{line:reversed:while0}
    \For{$\Rev{q}\in\Rev{Q}$}{\label{line:reversed:for0}
          $\mathbf{c}_{i,\Rev{q}} \gets \CMux{}(d_i, \mathbf{c}_{i-1,\Rev{\delta}(\Rev{q}, 1)}, \mathbf{c}_{i-1,\Rev{\delta}(\Rev{q}, 0)})$
    }
    \If{$i \mod \BootstrappingInterval = 0$}{
        \For{$\Rev{q}\in\Rev{Q}$}{\label{line:reversed:for00}
            $c_{i,\Rev{q}} \gets \SampleExtract(0,\mathbf{c}_{i,\Rev{q}})$\;\label{line:reversed:sample_extract}
            $\mathbf{c}_{i,\Rev{q}} \gets \Bootstrapping{}_\BK(c_{i,\Rev{q}})$\label{line:reversed:bootstrapping}\label{line:reversed:for01}
        }
    }
    $c_{i} \gets \SampleExtract{}(0,\mathbf{c}_{i,\Rev{q_0}})$\;
    \Outputs $c_i$\;\label{line:reversed:while1}
}
\caption{Our first online algorithm (\ReverseStream).}\label{alg:reversed}
\end{algorithm}

To run a DFA online, we modify \FHOffline so that the monitored ciphertexts are consumed from front to back. 
Our main idea is illustrated in \cref{fig:outline_algorithm:reversed}: we  \emph{reverse} the DFA $M$ beforehand and feed the ciphertexts $d_1, d_2,\dots,d_n$ to the reversed DFA $\Rev{M}$ serially from $d_1$ to $d_n$.

\cref{alg:reversed} shows the outline of our first online algorithm (\ReverseStream) based on the above idea.
\ReverseStream takes the same inputs as \FHOffline: 
a DFA $M$, \TRGSW ciphertexts $d_1, d_2, \dots, d_n$, a bootstrapping key $\BK$, and a positive integer $\BootstrappingInterval$ indicating the interval of bootstrapping.
In \cref{line:reversed:reverse},
we construct the minimum DFA $\Rev{M}$ that satisfies, for any $w=\sigma_1\sigma_2\dots\sigma_{k} \in \B^*$, 
we have $\Rev{M}(w) = M(w^R)$, where $w^R=\sigma_{k}\dots\sigma_1$.
We can construct such a DFA by reversing the transitions and by applying the powerset construction and the minimization algorithm\begin{ExtendedVersion}~\cite{DBLP:books/daglib/0000197}\end{ExtendedVersion}.

In the loop from \crefrange{line:reversed:while0}{line:reversed:while1},  the reversed DFA $\Rev{M}$ consumes each monitored ciphertext $d_i$, 
which corresponds to the loop from \crefrange{line:offline:for0}{line:offline:for1} in \cref{alg:fixed-parameter-offline}.
The main difference lies in \cref{line:reversed:for0,line:reversed:for00}: \cref{alg:reversed} applies \CMux and \Bootstrapping to all the states of $\Rev{M}$,
while \cref{alg:offline} only considers the states reachable from the initial state.
This is because in online monitoring, we monitor a stream of ciphertexts without knowing the number of the remaining ciphertexts, and all the states of the reversed DFA $\Rev{M}$ are potentially reachable from the initial state $\Rev{q_0}$ by the reversed remaining ciphertexts $d_n,d_{n-1},\dots,d_{i+1}$ because of the minimality of $\Rev{M}$.

\begin{theorem}\label{thm:correctness-reversed}
    Given a binary DFA $M$, \TRGSW ciphertexts $d_1,d_2,\dots,d_n$,
    a bootstrapping key $\BK$, and a positive integer $\BootstrappingInterval$,
    for each $i \in \{1,2,\dots,n\}$,
    if $c_i$ in \cref{alg:reversed} can be correctly decrypted,
    \cref{alg:reversed} outputs $c_i$ satisfying
    $\Dec(c_i) = M(\Dec(d_1)\Dec(d_2)\dots\Dec(d_{i}))$.

\end{theorem}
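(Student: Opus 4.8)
The plan is to reduce the correctness of \cref{alg:reversed} to the already-established correctness of the underlying offline computation by exhibiting a loop invariant that tracks what each ciphertext $\mathbf{c}_{i,\Rev{q}}$ decrypts to. The central claim I would isolate is the \emph{semantic invariant}: for every $i\in\{0,1,\dots,n\}$ and every state $\Rev{q}\in\Rev{Q}$, the first element of $\Dec(\mathbf{c}_{i,\Rev{q}})$ equals $\Rev{M}(\Rev{q},\Dec(d_i)\Dec(d_{i-1})\dots\Dec(d_1))$, i.e.\ whether $\Rev{M}$ started in $\Rev{q}$ reaches a final state after reading the first $i$ monitored bits \emph{in reverse order}. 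Establishing this invariant is the heart of the proof; once it holds at $i$ with $\Rev{q}=\Rev{q}_0$, I combine it with the defining property of the reversed DFA from \cref{line:reversed:reverse}, namely $\Rev{M}(w)=M(w^{\mathrm R})$ for all $w\in\B^*$, to conclude $\Dec(c_i)=\Rev{M}(\Dec(d_i)\dots\Dec(d_1))=M(\Dec(d_1)\dots\Dec(d_i))$, which is exactly the desired output.

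I would prove the invariant by induction on $i$. The base case $i=0$ is the initialization loop: $\mathbf{c}_{0,\Rev{q}}$ is set to $\Trivial(1)$ or $\Trivial(0)$ according to whether $\Rev{q}\in\Rev{F}$, and the empty reversed word $\Rev{M}(\Rev{q},\varepsilon)$ accepts exactly when $\Rev{q}\in\Rev{F}$, so the first plaintext bit matches. For the inductive step, I assume the invariant at $i-1$ and examine the update $\mathbf{c}_{i,\Rev{q}}\gets\CMux(d_i,\mathbf{c}_{i-1,\Rev{\delta}(\Rev{q},1)},\mathbf{c}_{i-1,\Rev{\delta}(\Rev{q},0)})$. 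By the specification of \CMux, the decrypted first element of $\mathbf{c}_{i,\Rev{q}}$ equals that of $\mathbf{c}_{i-1,\Rev{\delta}(\Rev{q},\Dec(d_i))}$, which by the induction hypothesis is $\Rev{M}(\Rev{\delta}(\Rev{q},\Dec(d_i)),\Dec(d_{i-1})\dots\Dec(d_1))$. Unfolding the definition of $\Rev{\delta}$ on a word, this is precisely $\Rev{M}(\Rev{q},\Dec(d_i)\Dec(d_{i-1})\dots\Dec(d_1))$, closing the induction. I would treat the \SampleExtract{} step as harmless: it merely lifts the first component of a \TRLWE into a \TLWE, preserving the relevant plaintext bit.

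The step I expect to require the most care is showing that the \emph{bootstrapping block} (\crefrange{line:reversed:for00}{line:reversed:for01}) does not disturb the invariant. Here I lean on the correctness hypothesis of the theorem—that every $c_i$ can be correctly decrypted—which is the analogue of the hypothesis in \cref{thm:leveled-offline} and propagates backward to guarantee that each intermediate $\mathbf{c}_{i,\Rev{q}}$ fed into a bootstrapping also decrypts correctly. Given that, the specification of \Bootstrapping{} tells me that $\mathbf{c}_{i,\Rev{q}}\gets\Bootstrapping_\BK(\SampleExtract(0,\mathbf{c}_{i,\Rev{q}}))$ produces a fresh \TRLWE whose first plaintext bit equals $\Dec(c_{i,\Rev{q}})$, i.e.\ the very bit the invariant asserts, while resetting the noise to a parameter-determined constant. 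Thus bootstrapping changes only the noise and the inert components of the \TRLWE, never the tracked first bit, so the invariant survives the block. The subtlety is purely about \emph{which} plaintext bit bootstrapping preserves—the first one, matching \SampleExtract$(0,\cdot)$—and about ensuring the correct-decryption hypothesis licenses applying the \Bootstrapping{} specification at each intermediate step; the rest is routine bookkeeping over the noise budget already handled in the complexity discussion.
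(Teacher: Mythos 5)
Your proof is correct, and it rests on the same two pivotal observations as the paper's own sketch: that \SampleExtract{} and \Bootstrapping{} preserve the tracked plaintext bit, and that the reversal property $\Rev{M}(w)=M(\Rev{w})$ turns acceptance by $\Rev{M}$ on $\Dec(d_i)\cdots\Dec(d_1)$ into the desired value $M(\Dec(d_1)\cdots\Dec(d_i))$. Where you diverge is in how the core semantics of the \CMux{} ladder is established. The paper treats \cref{thm:leveled-offline} as a black box: it implicitly views the first $i$ iterations of \cref{alg:reversed}, with the bootstrapping lines stripped out, as a run of \cref{alg:unfixed-parameter-offline} on the DFA $\Rev{M}$ with inputs $d_i,\dots,d_1$, and invokes that theorem to obtain $\Dec(c_i)=\Rev{M}(\Dec(d_i)\cdots\Dec(d_1))$. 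You instead inline that content: your loop invariant (the first plaintext bit of $\mathbf{c}_{i,\Rev{q}}$ equals acceptance of $\Rev{M}$ from $\Rev{q}$ on $\Dec(d_i)\cdots\Dec(d_1)$) is precisely what the citation would deliver at $\Rev{q}=\Rev{q}_0$, and your induction re-derives it from the \CMux{} specification alone. Your route buys self-containedness and closes some bookkeeping the paper glosses over: the black-box application silently requires re-indexing the inputs (\cref{alg:unfixed-parameter-offline} consumes ciphertexts back-to-front, \cref{alg:reversed} front-to-back) and relaxing the reachable-states restriction of \cref{alg:unfixed-parameter-offline} to all states of $\Rev{Q}$, both of which your invariant handles explicitly. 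The paper's route buys brevity and reuse of established machinery. Your treatments of the bootstrapping block and of the correct-decryption hypothesis are equivalent to the paper's, so the two proofs agree on every point of substance.
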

\noindent{\textit{Proof (sketch).}}
    \SampleExtract and \Bootstrapping in \cref{line:reversed:sample_extract,line:reversed:bootstrapping} do not change the decrypted value of $c_i$.
    Therefore, $\Dec(c_i)=\Rev{M}(\Dec(d_{i})\dots\Dec(d_1))$ for $i \in \{1,2,\dots, n\}$ by \cref{thm:leveled-offline}.
    As $\Rev{M}$ is the reversed DFA of $M$, we have
    \begin{math}
        \Dec(c_i) = \Rev{M}(\Dec(d_{i})\dots\Dec(d_1))
                  = M(\Dec(d_1)\dots\Dec(d_{i})).
    \end{math}
\qed{}

\subsection{Online Algorithm 2: \BlockStream}\label{subsec:bbs}

\newcommand{\BlockSize}{B}
\begin{algorithm}[tb]\scriptsize
\Input{A binary DFA $M=(Q, \Sigma=\B, \delta, q_0, F)$, \TRGSW monitored ciphertexts $d_1, d_2, d_3, \dots, d_n$, a bootstrapping key $\BK$, and $\BlockSize\in\mathbb{N}^+$}
\Output{For every $i\in\mathbb{N}^+$ ($i\le\lfloor n/\BlockSize\rfloor$), a \TLWE ciphertext $c_i$ satisfying $\Dec(c_i) = M(\Dec(d_1)\Dec(d_2)\dots\Dec(d_{i\times\BlockSize}))$}

$S_1 \gets \{q_0\}$ \tcp*{$S_i$: the states reachable by $(i-1) \times \BlockSize$ transitions.}\label{line:bbs:init}
\For{$i = 1,2,\dots, \lfloor n/\BlockSize\rfloor$}{\label{line:bbs:while0}
    $S_{i+1} \gets \{q\in Q \mid \exists s_i \in S_i.\, \text{$q$ is reachable from $s_i$ by $\BlockSize$ transitions} \}$\;\label{line:bbs:calc-Snext}
    \tcp*{We denote $S_{i+1} = \{s^{i+1}_1, s^{i+1}_2,\dots, s^{i+1}_{|S_{i+1}|}\}$}
    \For{$q\in Q$}{\label{line:bbs:offline0}
        \If{$q\in S_{i+1}$}{
            $j \gets \text{the index of $S_{i+1}$ such that $q = s_j^{i+1}$}$\;
            $\mathbf{c}^{T_i}_{\BlockSize,q} \gets \Trivial((j-1)\times 2 + (q\in F\text{ ? }1\text{ : }0))$\label{line:bbs:trivial}
        }
    }
    \For{$k=\BlockSize,\BlockSize-1,\dots,1$}{
        \For{$q\in Q$ such that $q$ is reachable from a state in $S_i$ by $(k-1)$ transitions}{
            $\mathbf{c}^{T_i}_{k-1,q} \gets \CMux(d_{(i-1)\BlockSize+k},\mathbf{c}^{T_i}_{k,\delta(q,1)}, \mathbf{c}^{T_i}_{k,\delta(q,0)})$\label{line:bbs:offline1}\label{line:bbs:cmux}
        }
    }
    \eIf{$|S_i|=1$}{\label{line:bbs:select0}
        $\mathbf{c}^\text{cur}_{i+1} \gets \mathbf{c}^{T_i}_{0,q}$ where $S_i = \{q\}$\;\label{line:bbs:assign-size-1}
    }{
        \For{$l=1,2,\dots,\lceil\log_2(|S_i|)\rceil$}{
            $c_l \gets \SampleExtract(l,\mathbf{c}^\text{cur}_{i})$\;\label{line:bbs:sample_extract}
            $d'_l \gets \CircuitBootstrapping{}_{\BK}(c_l)$\;\label{line:bbs:circuit_bootstrapping}
        }
        $\mathbf{c}^\text{cur}_{i+1} \gets \LookUp{}(\{\mathbf{c}^{T_i}_{0,s_1^i},\mathbf{c}^{T_i}_{0,s_2^i},\dots \mathbf{c}^{T_i}_{0,s_{|S_i|}^i}\},\{d'_1, \dots, d'_{\lceil\log_2(|S_i|)\rceil}\})$\;\label{line:bbs:select1}\label{line:bbs:lookup}
    }
    $c_i \gets \SampleExtract{}(0,\mathbf{c}^\text{cur}_{i+1})$\;\label{line:bbs:output0}
    \Outputs $c_i$\;\label{line:bbs:output1}\label{line:bbs:while1}
}

\caption{Our second online algorithm (\BlockStream).}\label{alg:bbs}
\end{algorithm}

A problem of \ReverseStream{} is that
the number of the states of the reversed DFA can explode
exponentially due to powerset construction (see \cref{subsec:complexity_analysis} for the details).
Another idea of an online algorithm without reversing a DFA is illustrated in \cref{fig:outline_algorithm:bbs}:
we split the monitored ciphertexts into \emph{blocks} of fixed size $\BlockSize$ and process each block in the same way as \cref{alg:fixed-parameter-offline}.
Intuitively, 
for each block $d_{1 + (i - 1) \times \BlockSize}, d_{2 + (i - 1) \times \BlockSize},\dots,\allowbreak d_{\BlockSize + (i - 1) \times \BlockSize}$ of ciphertexts,
we compute the function $T_{i}\colon Q \to Q$ satisfying $
    T_{i}(q) = \delta(q, d_{1 + (i - 1) \times \BlockSize}, d_{2 + (i - 1) \times \BlockSize},\dots,d_{\BlockSize + (i - 1) \times \BlockSize})
$ by a variant of \FHOffline, and keep track of the current state of the DFA after reading  the current prefix $d_{1}, d_{2},\dots,d_{\BlockSize + (i - 1) \times \BlockSize}$.

\Cref{alg:bbs} shows the outline of our second online algorithm (\BlockStream) based on the above idea.
\Cref{alg:bbs} takes a DFA $M$,
\TRGSW ciphertexts $d_1,d_2,\dots,d_n$,
a bootstrapping key $\BK$, and an integer $\BlockSize$ representing the interval of output.
To simplify the presentation, we make the following assumptions, which are relaxed later:
\begin{oneenumeration}
 \item $\BlockSize$ is small, and a trivial \TRLWE sample can be correctly decrypted
after $\BlockSize$ applications of \CMux;
 \item the size $|Q|$ of the states of the DFA $M$ is smaller than or equal to $2^{N-1}$,
where $N$ is the length of \TRLWE.
\end{oneenumeration}

The main loop of the algorithm is sketched on \crefrange{line:bbs:while0}{line:bbs:while1}.
In each iteration, we consume the $i$-th block consisting of $\BlockSize$  ciphertexts, i.e., $d_{(i-1)\BlockSize+1}, \dots, d_{(i-1)\BlockSize + \BlockSize}$.
In \cref{line:bbs:calc-Snext}, we compute the set
$S_{i + 1} = \{s^{i+1}_1, s^{i+1}_2, \dots, s^{i+1}_{|S_\text{next}|}\}$ 
of the states reachable from $q_0$ by reading a word of length $i \times \BlockSize$.

In \crefrange{line:bbs:offline0}{line:bbs:offline1}, for each $q \in Q$, we construct a ciphertext representing $T_{i}(q)$ by feeding the current block to a variant of \FHOffline.
More precisely, we construct a ciphertext $\mathbf{c}^{T_i}_{0,q}$ 
representing the pair of the Boolean value showing if $T_i(q) \in F$
and the state $T_i(q) \in Q$.
The encoding of such a pair in a \TRLWE ciphertext is as follows: the first element shows if $T_i(q) \in F$
and the other elements are the binary representation of $j \in \mathbb{N}^+$, where $j$ is such that
$s^{i+1}_{j} = T_i(q)$.

In \crefrange{line:bbs:select0}{line:bbs:select1}, we construct the ciphertext $\mathbf{c}^\text{cur}_{i+1}$ representing the state of the DFA $M$ after reading the current prefix $d_{1}, d_{2},\dots,d_{\BlockSize + (i - 1) \times \BlockSize}$.
If $|S_i|=1$, since the unique element $q$ of $S_i$ is the only possible state before consuming the current block, the state after reading it is $T(q)$.
Therefore, we let $\mathbf{c}^\text{cur}_{i+1} = \mathbf{c}^{T_i}_{0,q}$.

Otherwise, we extract the ciphertext representing the state $q$ before consuming the current block, and 
let $\mathbf{c}^\text{cur}_{i+1} = \mathbf{c}^{T_i}_{0,q}$.
Since the $\mathbf{c}^\text{cur}_i$ (except for the first element) represents $q$ (see \cref{line:bbs:trivial}), 
we extract them by applying \SampleExtract{} (\cref{line:bbs:sample_extract}) and 
convert them to \TRGSW by applying \CircuitBootstrapping{} (\cref{line:bbs:circuit_bootstrapping}).
Then, we choose $\mathbf{c}^{T_i}_{0,q}$ by applying \LookUp{} and set it to $\mathbf{c}^\text{cur}_{i+1}$.

The output after consuming the current block, i.e., $M(\Dec(d_1)\Dec(d_2)\dots\allowbreak\Dec(d_{(i-1)\BlockSize+\BlockSize}))$ is stored in the first element of the \TRLWE ciphertext $\mathbf{c}^\text{cur}_{i+1}$.
It is extracted by applying \SampleExtract{} in~\cref{line:bbs:output0} and output in~\cref{line:bbs:output1}.

\begin{theorem}%
    \label{thm:correctness-bbs}
    Given a binary DFA $M$, \TRGSW ciphertexts $d_1,d_2,\dots,d_n$,
    a bootstrapping key $\BK$, and a positive integer $\BlockSize$,
    for each $i \in \{1,2,\dots, \lfloor n/\BlockSize\rfloor\}$,
    if $c_i$ in \cref{alg:bbs} can be correctly decrypted,
    \cref{alg:bbs} outputs a \TLWE ciphertext $c_i$ satisfying
    $\Dec(c_i) = M(\Dec(d_1)\Dec(d_2)\dots\Dec(d_{i \times \BlockSize}))$.
\end{theorem}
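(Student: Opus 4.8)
The plan is to prove the stronger loop invariant that, after the $i$-th iteration of the main loop, the \TRLWE ciphertext $\mathbf{c}^\text{cur}_{i+1}$ encrypts the state reached by $M$ after reading the length-$(i\times\BlockSize)$ prefix, in the index-and-flag format fixed by \cref{line:bbs:trivial}. Write $\sigma_j=\Dec(d_j)$ and $p_i=\delta(q_0,\sigma_1\sigma_2\cdots\sigma_{i\BlockSize})$, with $p_0=q_0$; note $p_i=T_i(p_{i-1})$, where $T_i(q)=\delta(q,\sigma_{(i-1)\BlockSize+1}\cdots\sigma_{i\BlockSize})$, and that $p_{i-1}\in S_i$ by the definition of $S_i$. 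Say a \TRLWE ciphertext \emph{encodes $q$ relative to} $S=\{s_1,\dots,s_{|S|}\}$ if its plaintext $(b_1,\dots,b_N)$ has $b_1=[q\in F]$ while $(b_2,b_3,\dots)$ is the binary representation (lowest bit first) of $j-1$, where $s_j=q$. The invariant is $P(i)$: $\mathbf{c}^\text{cur}_{i+1}$ encodes $p_i$ relative to $S_{i+1}$. Throughout I assume—as licensed by the simplifying assumptions (1),(2) and the hypothesis that $c_i$ decrypts correctly—that the intermediate ciphertexts arising in iteration $i$ also decrypt correctly.

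First I would establish a \emph{block-transition lemma}: for each $q\in S_i$, the ciphertext $\mathbf{c}^{T_i}_{0,q}$ produced by the inner \CMux recursion (\cref{line:bbs:cmux}) encodes $T_i(q)$ relative to $S_{i+1}$. This is \cref{thm:leveled-offline} applied to the block $d_{(i-1)\BlockSize+1},\dots,d_{i\BlockSize}$, together with one observation: the correctness of the \CMux tree does not depend on the Boolean nature of the payloads at its leaves, since \CMux merely selects between two \TRLWE messages according to its \TRGSW selector. Hence the same inductive argument as in \cref{thm:leveled-offline} gives $\Dec(\mathbf{c}^{T_i}_{0,q})=\Dec(\mathbf{c}^{T_i}_{\BlockSize,T_i(q)})$, and the leaves are initialized exactly for the reachable targets $T_i(q)\in S_{i+1}$, so by \cref{line:bbs:trivial} the right-hand side is the index-and-flag encoding of $T_i(q)$ relative to $S_{i+1}$.

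Then I would run the induction on $i$. For the base case $i=1$, $S_1=\{q_0\}$ triggers the size-one branch (\cref{line:bbs:assign-size-1}), so $\mathbf{c}^\text{cur}_2=\mathbf{c}^{T_1}_{0,q_0}$, which by the block lemma encodes $T_1(q_0)=p_1$ relative to $S_2$. For the step, assume $P(i-1)$: $\mathbf{c}^\text{cur}_i$ encodes $p_{i-1}$ relative to $S_i$. If $|S_i|=1$ then $S_i=\{p_{i-1}\}$ and the size-one branch again yields $\mathbf{c}^\text{cur}_{i+1}=\mathbf{c}^{T_i}_{0,p_{i-1}}$, encoding $p_i$. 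If $|S_i|>1$, the bits extracted by \SampleExtract in \cref{line:bbs:sample_extract} are, by $P(i-1)$, exactly the binary digits of $\mathrm{idx}_{S_i}(p_{i-1})-1$; \CircuitBootstrapping preserves these values, so the \TRGSW selector $(d'_1,\dots)$ addresses slot $\mathrm{idx}_{S_i}(p_{i-1})-1$ of the table in \cref{line:bbs:lookup}—which, under the $0$-based \LookUp addressing matched to the $-1$ offset of \cref{line:bbs:trivial}, is $\mathbf{c}^{T_i}_{0,p_{i-1}}$. By the block lemma this encodes $p_i$ relative to $S_{i+1}$, establishing $P(i)$.

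Finally, $c_i=\SampleExtract(0,\mathbf{c}^\text{cur}_{i+1})$ extracts $b_1=[p_i\in F]$, which equals $M(\sigma_1\cdots\sigma_{i\BlockSize})=M(\Dec(d_1)\cdots\Dec(d_{i\BlockSize}))$, giving the claim. I expect the main obstacle to be the bookkeeping of the selection step: aligning exactly (a) the bit positions written by \cref{line:bbs:trivial} (flag in bit $1$, index shifted up by one), (b) the \SampleExtract offsets $l$ in \cref{line:bbs:sample_extract}, and (c) the $0$-based addressing of \LookUp, so that the selected entry is genuinely $\mathbf{c}^{T_i}_{0,p_{i-1}}$ and not an off-by-one neighbour. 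The one conceptual point to pin down is the arbitrary-payload version of \cref{thm:leveled-offline} used in the block lemma; the rest is a routine unwinding of the loop.
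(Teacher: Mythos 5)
Your proposal is correct and follows essentially the same route as the paper's proof: the same induction on the block index with the invariant that $\mathbf{c}^\text{cur}_{i+1}$ encodes the reached state $q^i$ (your $p_i$) in the index-and-flag format, the same case split on $|S_i|$, and the same appeal to a payload-agnostic variant of \cref{thm:leveled-offline} (your block-transition lemma) together with the observation that the bits fed to \CircuitBootstrapping{} address exactly the entry $\mathbf{c}^{T_i}_{0,q^{i-1}}$ in the \LookUp{} table. Your write-up is somewhat more explicit than the paper's sketch about the encoding being relative to $S_{i+1}$ and about the $0$-based addressing alignment, but these are elaborations of the same argument, not a different one.
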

\begin{proof}[sketch]
    Let $q^i$ be $\delta(q_0, \Dec(d_1)\Dec(d_2)\dots\Dec(d_{i \times \BlockSize}))$.
    It suffices to show that, for each iteration $i$ in \cref{line:bbs:while0},
    $\Dec(\mathbf{c}_{i+1}^\text{cur})$ represents a pair of the Boolean value showing if $q^i\in F$
    and the state $q^i\in Q$ in the above encoding format.
    This is because $c_i$ represents the first element of $\mathbf{c}_{i+1}^\text{cur}$.
    \Cref{alg:bbs} selects $\mathbf{c}_{i+1}^\text{cur}$ from $\{\mathbf{c}_{0,q}^{T_i}\}_{q\in S_i}$ in \cref{line:bbs:assign-size-1} or \cref{line:bbs:lookup}.
    By using a slight variant of \cref{thm:leveled-offline} in \crefrange{line:bbs:select0}{line:bbs:select1},
    we can show that $\mathbf{c}_{0,q}^{T_i}$ represents if $T^i(q)\in F$ and the state $T^i(q)$.
    Therefore, the proof is completed by showing $\Dec(\mathbf{c}_{i+1}^\text{cur}) = \Dec(\mathbf{c}_{0,q^{i-1}}^{T_i})$.

    We prove $\Dec(\mathbf{c}_{i+1}^\text{cur}) = \Dec(\mathbf{c}_{0,q^{i-1}}^{T_i})$ by induction on $i$.
    If $i = 1$, $|S_i| = 1$ holds, and by $q^{i-1} \in S_{i}$,
    we have $\Dec(\mathbf{c}_{i+1}^\text{cur}) = \Dec(\mathbf{c}_{0,q^{i-1}}^{T_i})$.
    If $i > 1$ and $|S_i| = 1$,
    $\Dec(\mathbf{c}_{i+1}^\text{cur}) = \Dec(\mathbf{c}_{0,q^{i-1}}^{T_i})$ holds similarly.
    If $i > 1$ and $|S_i| > 1$, by induction hypothesis,
    $\Dec(\mathbf{c}_{i}^\text{cur})$ represents if $T_{i-1}(q^{i-2}) = q^{i-1}\in F$ and the state $q^{i-1}$.
    By construction in \cref{line:bbs:circuit_bootstrapping},
    $\Dec(d'_l)$ is equal to the $l$-th bit of $(j-1)$, where $j$ is such that $s_j^i = q^{i-1}$.
    Therefore, the result of the application of \LookUp in \cref{line:bbs:select1}
    is equivalent to $\mathbf{c}_{0,s^{i}_{j}}^{T_i} (= \mathbf{c}_{0,q^{i-1}}^{T_i})$, and 
    we have $\Dec(\mathbf{c}_{i+1}^\text{cur}) = \Dec(\mathbf{c}_{0,q^{i-1}}^{T_i})$.
    \qed{}
\end{proof}

We note that \BlockStream generates output
for every $B$ monitored ciphertexts
while \ReverseStream generates output for every monitored ciphertext.

We also remark that when $B=1$, \BlockStream{} consumes every monitored ciphertext from front to back.
However, such a setting is slow due to a huge number of \CircuitBootstrapping{} operations, as pointed out in \cref{subsec:complexity_analysis}.

\noindent{\textbf{Relaxations of the Assumptions.}}
When $\BlockSize$ is too large, $\mathbf{c}_{0,q}^{T_i}$ may not be correctly decrypted.
We can relax this restriction by
 inserting \Bootstrapping{} just after \cref{line:bbs:cmux}, which is much like \cref{alg:offline}.
When the size $|Q|$ of the states of the DFA $M$ is larger than $2^{N-1}$, we cannot store the index $j$ of the state using one \TRLWE ciphertext (\cref{line:bbs:trivial}).
We can relax this restriction by using multiple \TRLWE ciphertexts for  $\mathbf{c}_{0,q}^{T_i}$  and  $\mathbf{c}_{i+1}^{\mathrm{cur}}$.

\begin{table}[tb]\centering
    \caption{Complexity of the proposed algorithms with respect to the number $|Q|$ of the states of the DFA and the size $|\phi|$ of the LTL formula. For \BlockStream{}, we show the complexity \emph{before} the relaxation.}\label{tbl:complexity-analysis}\scriptsize
    \begin{tabular}{cc|ccc|c}\toprule
        \multirow{2}{*}{Algorithm} & \multirow{2}{*}{w.r.t.} & \multicolumn{3}{c|}{Number of Applications} & \multirow{2}{*}{Space}\\
        & & \CMux{} & \Bootstrapping{} & \CircuitBootstrapping{} & \\\midrule
        \multirow{2}{*}{\FHOffline} & DFA & $O(n|Q|)$ & $O(n|Q|/\BootstrappingInterval)$ & --- & $O(|Q|)$\\
        & LTL & $O(n2^{2^{|\phi|}})$ & $O(n2^{2^{|\phi|}}/\BootstrappingInterval)$ & --- & $O(2^{2^{|\phi|}})$\\\midrule
        \multirow{2}{*}{\ReverseStream} & DFA & $O(n2^{|Q|})$ & $O(n2^{|Q|}/\BootstrappingInterval)$ & --- & $O(2^{|Q|})$\\
        & LTL & $O(n2^{|\phi|})$ & $O(n2^{|\phi|}/\BootstrappingInterval)$ & --- & $O(2^{|\phi|})$\\\midrule
        \multirow{2}{*}{\BlockStream} & DFA & $O(n|Q|)$ & --- & $O((n\log |Q|)/\BlockSize)$ & $O(|Q|)$\\
        & LTL & $O(n2^{2^{|\phi|}})$ & --- & $O(n2^{|\phi|}/\BlockSize)$ & $O(2^{2^{|\phi|}})$\\\bottomrule
    \end{tabular}
\end{table}

\subsection{Complexity Analysis}\label{subsec:complexity_analysis}

\cref{tbl:complexity-analysis} summarizes the complexity of our algorithms with respect to both the number $|Q|$ of the states of the DFA and the size $|\phi|$ of the LTL formula.
We note that, for \BlockStream, we do not relax the above assumptions
for simplicity.
Notice that the number of applications of the homomorphic operations is linear to the length $n$ of the monitored ciphertext. Moreover, the space complexity is independent of $n$.
This shows that our algorithms satisfy the properties essential to good online monitoring;
\begin{oneenumeration}
 \item they only store the minimum of data, and
 \item they run quickly enough under a real-time setting
\end{oneenumeration}~\cite{DBLP:series/lncs/BartocciDDFMNS18}.

The time and the space complexity of \FHOffline and \BlockStream are linear to $|Q|$.
Moreover, in these algorithms, when the $i$-th monitored ciphertext is consumed, only the states reachable by a word of length $i$ are considered, which often makes the scalability even better.
In contrast, the time and the space complexity of \ReverseStream is exponential to $|Q|$. 
This is because of the worst-case size of the reversed DFA due to the powerset construction.
Since the size of the reversed DFA is usually reasonably small,
the practical scalability of \ReverseStream is also much better, which is observed through the experiments in~\cref{sec:experiment}.

For \FHOffline and \BlockStream,
$|Q|$ is \emph{doubly} exponential to $|\phi|$ because we first
convert $\phi$ to an NFA (one exponential) and then construct a DFA from the NFA (second exponential).
In contrast, for \ReverseStream, it is known that we can construct a reversed DFA for $\phi$ of the size of at most \emph{singly} exponential to $|\phi|$~\cite{ijcai2020-690}.
Note that, in a practical scenario exemplified in~\cref{sec:experiment},
the size of the DFA constructed from $\phi$ is expected to be much smaller than the worst one.

\section{Oblivious Online LTL Monitoring}\label{sec:oblivious-ltl-mon}

In this section, we formalize the scheme of oblivious online LTL monitoring.
We consider a two-party setting with a client and a server and refer to 
the client and the server as Alice and Bob, respectively.
Here, we assume that Alice has private data sequence 
$w = \sigma_1\sigma_2\dots\sigma_n$ to be monitored
where $\sigma_i\in 2^{\AP}$ for each $i \ge 1$.
Meanwhile, Bob has a private LTL formula $\phi$.
The purpose of oblivious online LTL monitoring is to let Alice know
if $\sigma_1\sigma_2\dots\sigma_{i}\models\phi$
for each $i \ge 1$,
while keeping the privacy of Alice and Bob.

\subsection{Threat Model}
We assume that Alice is \emph{malicious},
i.e., Alice can deviate arbitrarily from the protocol
to try to learn $\phi$.
We also assume that Bob is \emph{honest-but-curious},
i.e., Bob correctly follows the protocol,
but he tries to learn $w$ from the information he obtains from the protocol execution.
We do not assume that Bob is malicious in the present paper; a protocol that is secure against malicious Bob requires more sophisticated primitives such as zero-knowledge proofs and is left as future work.

\paragraph{Public and Private Data.}
We assume that the TFHE parameters,
the parameters of our\begin{ExtendedVersion} proposed\end{ExtendedVersion} 
algorithms (e.g., $\BootstrappingInterval$ and $B$),
Alice's public key $\PK$, and Alice's bootstrapping key $\BK$
are public to both parties.
The input $w$ and the monitoring result are private for Alice,
and the LTL formula $\phi$ is private for Bob.

\subsection{Protocol Flow}\label{subsec:protocol-flow}

The protocol flow of oblivious online LTL monitoring is shown in \cref{fig:protocol}.
It takes $\sigma_1,\sigma_2,\dots,\sigma_n$, $\phi$, and $b\in\B$ as its parameters,
where $b$ is a flag that indicates the algorithm Bob uses: \ReverseStream ($b=0$) or \BlockStream ($b=1$).
After generating her secret key and sending the corresponding public and bootstrapping key to Bob (\crefrange{line:protocol:gen-sk}{line:protocol:send-pk-and-bk}),
Alice encrypts her inputs into ciphertexts and sends the ciphertexts to Bob one by one (\crefrange{line:protocol:for0}{line:protocol:send-ctxt}).
In contrast, Bob first converts his LTL formula $\phi$ to a binary DFA $M$ (\cref{line:protocol:convert-phi}).
Then, Bob serially feeds the received ciphertexts from Alice to
\ReverseStream or \BlockStream (\cref{line:protocol:feed-ctxt}) and
returns the encrypted output of the algorithm to Alice (\crefrange{line:protocol:output0}{line:protocol:output1}).

Note that, although the alphabet of a DFA constructed from an LTL formula is $2^{\AP}$~\cite{DBLP:journals/fmsd/TabakovRV12},
our proposed algorithms require a binary DFA.\@
Thus, in \cref{line:protocol:convert-phi},
we convert the DFA constructed from $\phi$ to a binary DFA $M$
by inserting auxiliary states.
Besides, in \cref{line:protocol:encode},
we encode an observation $\sigma_i \in 2^{\AP}$ by a sequence $\sigma'_i\coloneqq({\sigma'}_i^1, {\sigma'}_i^2, \dots, {\sigma'}_i^{|AP|}) \in \B^{|AP|}$
such that $p_j \in \sigma_i$ if and only if ${\sigma'}_i^j$ is true,
where $\AP = \{p_1, \dots, p_{|\AP|}\}$. 
We also note that, taking this encoding into account,
we need to properly set the parameters for \BlockStream{} to generate an output
for each $|\AP|$-size block of Alice's inputs,
i.e., $B$ is taken to be equal to $|\AP|$.

Here, we provide brief sketches of the correctness and security analysis of the proposed protocol.
See \crefAppendix{correctness-and-security-of-protocol} for detailed explanations and proofs.

\begin{figure}[tb]
\setlength{\interspacetitleruled}{0pt}%
\setlength{\algotitleheightrule}{0pt}%
\begin{algorithm}[H]\scriptsize
\Input{%
    Alice's private inputs $\sigma_1, \sigma_2, \dots, \sigma_n\in 2^\AP$,
    Bob's private LTL formula $\phi$,
    and $b\in\B$}
\Output{%
    For every $i\in\{1,2,\dots n\}$,
    Alice's private output representing
    $\sigma_1\sigma_2\dots\sigma_{i}\models\phi$}

Alice generates her secret key $\SK$.\;\label{line:protocol:gen-sk}
Alice generates her public key $\PK$ and bootstrapping key $\BK$ from $\SK$.\;\label{line:protocol:gen-pk-and-bk}
Alice sends $\PK$ and $\BK$ to Bob.\;\label{line:protocol:send-pk-and-bk}
Bob converts $\phi$ to a binary DFA $M=(Q, \Sigma=\B, \delta, q_0, F)$.\;\label{line:protocol:convert-phi}
\For{$i=1,2,\dots,n$}{\label{line:protocol:for0}%
    Alice encodes $\sigma_i$ to a sequence $\sigma'_i \coloneqq ({\sigma'}_i^1,{\sigma'}_i^{2},\dots,{\sigma'}_i^{|\AP|})\in\B^{|\AP|}$.\;\label{line:protocol:encode}
    Alice calculates $d_i \coloneqq (\Enc({\sigma'}_i^1),\Enc({\sigma'}_i^2),\dots\Enc({\sigma'}_i^{|\AP|}))$.\;\label{line:protocol:encrypt}
    Alice sends $d_i$ to Bob.\;\label{line:protocol:send-ctxt}
    Bob feeds the elements of $d_i$ to \ReverseStream (if $b=0$) or \BlockStream (if $b=1$).\;\label{line:protocol:feed-ctxt}
    \tcp{$\sigma'_1\cdot\sigma'_2\cdots\sigma'_i$ refers ${\sigma'}_1^{1}\dots{\sigma'}_1^{|\AP|}{\sigma'}_2^{1}\dots{\sigma'}_2^{|\AP|}{\sigma'}_3^{1}\dots{\sigma'}_i^{|\AP|}$.}
    Bob obtains the output \TLWE ciphertext $c$ produced by the algorithm,
    where $\Dec(c)=M(\sigma'_1\cdot\sigma'_2\cdot\cdots\cdot\sigma'_i)$.\;\label{line:protocol:obtain-output}
    \label{line:protocol:output0}%
    Bob randomizes $c$ to obtain $c'$ so that $\Dec(c) = \Dec(c')$.\;\label{line:protocol:randomize}
    Bob sends $c'$ to Alice.\;
    Alice calculates $\Dec(c')$ to obtain the result in plaintext.\;\label{line:protocol:output1}
}
\end{algorithm}
\caption{Protocol of oblivious online LTL monitoring.}\label{fig:protocol}
\end{figure}

\noindent{\textbf{Correctness.}}
We can show that Alice obtains correct results in our protocol
directly by \cref{thm:correctness-reversed} and \cref{thm:correctness-bbs}.

\noindent{\textbf{Security.}} Intuitively, after the execution of the protocol described in \cref{fig:protocol},
Alice should learn $M(\sigma'_1\cdot\sigma'_2\cdots\sigma'_{i})$
for every $i\in\{1,2,\dots,n\}$ but nothing else.
Besides, Bob should learn the input size $n$ but nothing else.

\noindent{\textit{Privacy for Alice.}}
We observe that Bob only obtains $\Enc({\sigma'}_i^j)$ from Alice for each $i\in\{1,2,\dots,n\}$ and $j\in\{1,2,\dots,|\AP|\}$.
Therefore, we need to show that Bob learns nothing from the ciphertexts generated by Alice. 
Since TFHE provides IND-CPA security~\cite{bellare2005introduction},
we can easily guarantee the client's privacy for Alice.

\noindent{\textit{Privacy for Bob.}}
The privacy guarantee for Bob is more complex than that for Alice. 
Here, Alice obtains $\sigma'_1,\sigma'_2,\dots,\sigma'_n$ and the results $M(\sigma'_1\cdot\sigma'_2\cdots\sigma'_{i})$ for every $i\in\{1,2,\dots,n\}$ in plaintext.
In the protocol (\cref{fig:protocol}), Alice does not obtain $\phi,M$ themselves or their sizes,
and it is known that a finite number of checking $M(w)$
cannot uniquely identify $M$
if any additional information (e.g., $|M|$) is not given~\cite{Moore1956,DBLP:journals/iandc/Angluin81}.
Thus, it is impossible for Alice to identify $M$ (or $\phi$) from the input/output pairs.

Nonetheless, to fully guarantee the model privacy of Bob,
we also need to show that, when Alice inspects the result ciphertext $c'$,
it is impossible for Alice to know Bob's specification, i.e.,
what homomorphic operations were applied by Bob to obtain $c'$.
A TLWE ciphertext contains a random nonce and a noise term.
By randomizing $c$ properly in \cref{line:protocol:randomize},
we ensure that the random nonce of $c'$ is not biased~\cite{DBLP:journals/jacm/Regev09}.
By assuming SRL security~\cite{DBLP:journals/iacr/BrakerskiDGM20a,DBLP:journals/iacr/GayP20} over TFHE, we can ensure that there is no information leakage regarding Bob's specifications through the noise bias.
A more detailed discussion is in \crefAppendix{correctness-and-security-of-protocol}.

\section{Experiments}\label{sec:experiment}

We experimentally evaluated the proposed algorithms (\ReverseStream{} and \BlockStream{}) and protocol.
We pose the following two research questions:
\begin{description}
    \item[RQ1] Are the proposed algorithms scalable with respect to the size of the monitored ciphertexts and that of the DFA?
    \item[RQ2] Are the proposed algorithms fast enough in a realistic monitoring scenario?
    \item[RQ3] Does a standard IoT device have sufficient computational power acting as a client in the proposed protocol?
\end{description}
To answer RQ1, we conducted an experiment with our original benchmark where the length of the monitored ciphertexts and the size of the DFA are configurable (\cref{subsec:experiment-simple}).
To answer RQ2 and RQ3, we conducted a case study on blood glucose monitoring; we monitored blood glucose data obtained by simglucose~\footnote{\url{https://github.com/jxx123/simglucose}} against specifications taken from~\cite{DBLP:conf/iotdi/YoungCGPF18,DBLP:conf/rv/CameronFMS15} (\cref{subsec:experiment-monitoring-bg}).
To answer RQ3,
we measured the time spent on the encryption of plaintexts, %
which is the heaviest task for a client during the execution of the online protocol.

We implemented our algorithms in C++20.
Our implementation is publicly available\footnote{Our implementation is uploaded to \SrcURL.\label{page:src-url}}.
We used Spot~\cite{DBLP:conf/atva/Duret-LutzLFMRX16} to convert a safety LTL formula to a DFA.\@
We also used a Spot's utility program \verb|ltlfilt| to calculate the size of an LTL formula\footnote{%
We desugared a formula by \verb|ltlfilt| with option \verb|--unabbreviate="eFGiMRW^"| and counted the number of the characters.}.
We used TFHEpp~\cite{DBLP:conf/uss/MatsuokaBMS021} as the TFHE library.
We used $N=1024$ as the size of the message represented by one \TRLWE{} ciphertext,
which is a parameter of TFHE.\@
The complete TFHE parameters we used are shown in \crefAppendix{TFHE_parameters}.

For RQ1 and RQ2, we ran experiments on a workstation with Intel Xeon Silver 4216 (3.2GHz; 32 cores and 64 threads in total), 128GiB RAM, and Ubuntu 20.04.2 LTS.
We ran each instance of the experiment setting five times and reported the average.
We measured the time to consume all of the monitored ciphertexts in the main loop of each algorithm,
i.e., in \crefrange{line:reversed:while0}{line:reversed:while1} in \ReverseStream
and in \crefrange{line:bbs:while0}{line:bbs:while1} in \BlockStream.

For RQ3, we ran experiments on two single-board computers with and without Advanced Encryption Standard (AES)~\cite{daemen1999aes} hardware accelerator.
ROCK64\begin{ExtendedVersion}~\footnote{\url{https://www.pine64.org/devices/single-board-computers/rock64/}}\end{ExtendedVersion} has ARM Cortex A53 CPU cores (1.5GHz; 4 cores) with AES hardware accelerator and 4GiB RAM.\@
Raspberry Pi 4\begin{ExtendedVersion}~\footnote{\url{https://www.raspberrypi.com/products/raspberry-pi-4-model-b/}}\end{ExtendedVersion} has ARM Cortex A72 CPU cores (1.5GHz; 4 cores) without AES hardware accelerator and 4GiB RAM.\@

\subsection{RQ1: Scalability}\label{subsec:experiment-simple}

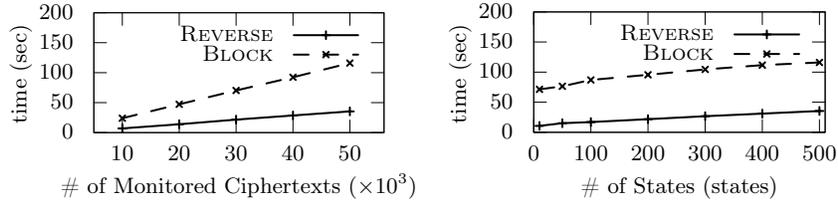
\begin{figure}[tb]
    \centering
    \resizebox{.47\textwidth}{!}{\begin{tikzpicture}[gnuplot]
\path (0.000,0.000) rectangle (6.000,3.000);
\gpcolor{color=gp lt color border}
\gpsetlinetype{gp lt border}
\gpsetdashtype{gp dt solid}
\gpsetlinewidth{1.00}
\draw[gp path] (1.320,0.985)--(1.500,0.985);
\draw[gp path] (5.447,0.985)--(5.267,0.985);
\node[gp node right] at (1.136,0.985) {$0$};
\draw[gp path] (1.320,1.412)--(1.500,1.412);
\draw[gp path] (5.447,1.412)--(5.267,1.412);
\node[gp node right] at (1.136,1.412) {$50$};
\draw[gp path] (1.320,1.838)--(1.500,1.838);
\draw[gp path] (5.447,1.838)--(5.267,1.838);
\node[gp node right] at (1.136,1.838) {$100$};
\draw[gp path] (1.320,2.265)--(1.500,2.265);
\draw[gp path] (5.447,2.265)--(5.267,2.265);
\node[gp node right] at (1.136,2.265) {$150$};
\draw[gp path] (1.320,2.691)--(1.500,2.691);
\draw[gp path] (5.447,2.691)--(5.267,2.691);
\node[gp node right] at (1.136,2.691) {$200$};
\draw[gp path] (1.725,0.985)--(1.725,1.165);
\draw[gp path] (1.725,2.691)--(1.725,2.511);
\node[gp node center] at (1.725,0.677) {$10$};
\draw[gp path] (2.534,0.985)--(2.534,1.165);
\draw[gp path] (2.534,2.691)--(2.534,2.511);
\node[gp node center] at (2.534,0.677) {$20$};
\draw[gp path] (3.343,0.985)--(3.343,1.165);
\draw[gp path] (3.343,2.691)--(3.343,2.511);
\node[gp node center] at (3.343,0.677) {$30$};
\draw[gp path] (4.152,0.985)--(4.152,1.165);
\draw[gp path] (4.152,2.691)--(4.152,2.511);
\node[gp node center] at (4.152,0.677) {$40$};
\draw[gp path] (4.961,0.985)--(4.961,1.165);
\draw[gp path] (4.961,2.691)--(4.961,2.511);
\node[gp node center] at (4.961,0.677) {$50$};
\draw[gp path] (1.320,2.691)--(1.320,0.985)--(5.447,0.985)--(5.447,2.691)--cycle;
\node[gp node center,rotate=-270] at (0.292,1.838) {time (sec)};
\node[gp node center] at (3.383,0.215) {\# of Monitored Ciphertexts ($\times 10^3$)};
\node[gp node right] at (3.979,2.372) {\ReverseStream};
\gpsetlinewidth{2.00}
\draw[gp path] (4.163,2.372)--(5.079,2.372);
\draw[gp path] (1.725,1.045)--(2.534,1.104)--(3.343,1.170)--(4.152,1.229)--(4.961,1.287);
\gpsetpointsize{4.00}
\gppoint{gp mark 1}{(1.725,1.045)}
\gppoint{gp mark 1}{(2.534,1.104)}
\gppoint{gp mark 1}{(3.343,1.170)}
\gppoint{gp mark 1}{(4.152,1.229)}
\gppoint{gp mark 1}{(4.961,1.287)}
\gppoint{gp mark 1}{(4.621,2.372)}
\node[gp node right] at (3.979,2.095) {\BlockStream};
\gpsetdashtype{gp dt 2}
\draw[gp path] (4.163,2.095)--(5.079,2.095);
\draw[gp path] (1.725,1.190)--(2.534,1.388)--(3.343,1.585)--(4.152,1.773)--(4.961,1.975);
\gppoint{gp mark 2}{(1.725,1.190)}
\gppoint{gp mark 2}{(2.534,1.388)}
\gppoint{gp mark 2}{(3.343,1.585)}
\gppoint{gp mark 2}{(4.152,1.773)}
\gppoint{gp mark 2}{(4.961,1.975)}
\gppoint{gp mark 2}{(4.621,2.095)}
\gpsetdashtype{gp dt solid}
\gpsetlinewidth{1.00}
\draw[gp path] (1.320,2.691)--(1.320,0.985)--(5.447,0.985)--(5.447,2.691)--cycle;
\gpdefrectangularnode{gp plot 1}{\pgfpoint{1.320cm}{0.985cm}}{\pgfpoint{5.447cm}{2.691cm}}
\end{tikzpicture}}
    \resizebox{0.47\textwidth}{!}{\begin{tikzpicture}[gnuplot]
\path (0.000,0.000) rectangle (6.000,3.000);
\gpcolor{color=gp lt color border}
\gpsetlinetype{gp lt border}
\gpsetdashtype{gp dt solid}
\gpsetlinewidth{1.00}
\draw[gp path] (1.320,0.985)--(1.500,0.985);
\draw[gp path] (5.447,0.985)--(5.267,0.985);
\node[gp node right] at (1.136,0.985) {$0$};
\draw[gp path] (1.320,1.412)--(1.500,1.412);
\draw[gp path] (5.447,1.412)--(5.267,1.412);
\node[gp node right] at (1.136,1.412) {$50$};
\draw[gp path] (1.320,1.838)--(1.500,1.838);
\draw[gp path] (5.447,1.838)--(5.267,1.838);
\node[gp node right] at (1.136,1.838) {$100$};
\draw[gp path] (1.320,2.265)--(1.500,2.265);
\draw[gp path] (5.447,2.265)--(5.267,2.265);
\node[gp node right] at (1.136,2.265) {$150$};
\draw[gp path] (1.320,2.691)--(1.500,2.691);
\draw[gp path] (5.447,2.691)--(5.267,2.691);
\node[gp node right] at (1.136,2.691) {$200$};
\draw[gp path] (1.320,0.985)--(1.320,1.165);
\draw[gp path] (1.320,2.691)--(1.320,2.511);
\node[gp node center] at (1.320,0.677) {$0$};
\draw[gp path] (2.129,0.985)--(2.129,1.165);
\draw[gp path] (2.129,2.691)--(2.129,2.511);
\node[gp node center] at (2.129,0.677) {$100$};
\draw[gp path] (2.938,0.985)--(2.938,1.165);
\draw[gp path] (2.938,2.691)--(2.938,2.511);
\node[gp node center] at (2.938,0.677) {$200$};
\draw[gp path] (3.748,0.985)--(3.748,1.165);
\draw[gp path] (3.748,2.691)--(3.748,2.511);
\node[gp node center] at (3.748,0.677) {$300$};
\draw[gp path] (4.557,0.985)--(4.557,1.165);
\draw[gp path] (4.557,2.691)--(4.557,2.511);
\node[gp node center] at (4.557,0.677) {$400$};
\draw[gp path] (5.366,0.985)--(5.366,1.165);
\draw[gp path] (5.366,2.691)--(5.366,2.511);
\node[gp node center] at (5.366,0.677) {$500$};
\draw[gp path] (1.320,2.691)--(1.320,0.985)--(5.447,0.985)--(5.447,2.691)--cycle;
\node[gp node center,rotate=-270] at (0.292,1.838) {time (sec)};
\node[gp node center] at (3.383,0.215) {\# of States (states)};
\node[gp node right] at (3.979,2.372) {\ReverseStream};
\gpsetlinewidth{2.00}
\draw[gp path] (4.163,2.372)--(5.079,2.372);
\draw[gp path] (1.401,1.076)--(1.725,1.114)--(2.129,1.130)--(2.938,1.171)--(3.748,1.213)%
  --(4.557,1.250)--(5.366,1.287);
\gpsetpointsize{4.00}
\gppoint{gp mark 1}{(1.401,1.076)}
\gppoint{gp mark 1}{(1.725,1.114)}
\gppoint{gp mark 1}{(2.129,1.130)}
\gppoint{gp mark 1}{(2.938,1.171)}
\gppoint{gp mark 1}{(3.748,1.213)}
\gppoint{gp mark 1}{(4.557,1.250)}
\gppoint{gp mark 1}{(5.366,1.287)}
\gppoint{gp mark 1}{(4.621,2.372)}
\node[gp node right] at (3.979,2.095) {\BlockStream};
\gpsetdashtype{gp dt 2}
\draw[gp path] (4.163,2.095)--(5.079,2.095);
\draw[gp path] (1.401,1.594)--(1.725,1.637)--(2.129,1.727)--(2.938,1.800)--(3.748,1.876)%
  --(4.557,1.937)--(5.366,1.975);
\gppoint{gp mark 2}{(1.401,1.594)}
\gppoint{gp mark 2}{(1.725,1.637)}
\gppoint{gp mark 2}{(2.129,1.727)}
\gppoint{gp mark 2}{(2.938,1.800)}
\gppoint{gp mark 2}{(3.748,1.876)}
\gppoint{gp mark 2}{(4.557,1.937)}
\gppoint{gp mark 2}{(5.366,1.975)}
\gppoint{gp mark 2}{(4.621,2.095)}
\gpsetdashtype{gp dt solid}
\gpsetlinewidth{1.00}
\draw[gp path] (1.320,2.691)--(1.320,0.985)--(5.447,0.985)--(5.447,2.691)--cycle;
\gpdefrectangularnode{gp plot 1}{\pgfpoint{1.320cm}{0.985cm}}{\pgfpoint{5.447cm}{2.691cm}}
\end{tikzpicture}}
    \caption{Experimental results of $M_m$. The left figure shows runtimes when the number of states (i.e., $m$) is fixed to $500$, while the right one is when the number of monitored ciphertexts (i.e., $n$) is fixed to $50000$.}
    \label{fig:artificial-dfa}
\end{figure}

\noindent{\textbf{Experimental Setup.}}
In the experiments to answer RQ1,
we used a simple binary DFA $M_m$,
which accepts a word $w$ if and only if the number of the appearance of $1$ in $w$ is a multiple of $m$.
The number of the states of $M_m$ is $m$.

Our experiments are twofold.
In the first experiment, we fixed the DFA size $m$ to $500$
and increased the size $n$ of the input word $w$ from $10000$ to $50000$.
In the second experiment, we fixed $n=50000$ and changed $m$ from $10$ to $500$.
The parameters we used are $\BootstrappingInterval = 30000$ and $B = 150$.

\noindent{\textbf{Results and Discussion.}}
\cref{fig:artificial-dfa} shows the results of the experiments.
In the left plot of \cref{fig:artificial-dfa}, we observe that the runtimes of both algorithms are linear to the length of the monitored ciphertexts.
This coincides with the complexity analysis in \cref{subsec:complexity_analysis}.

In the right plot of \cref{fig:artificial-dfa}, we observe that the runtimes of both algorithms are at most linear to the number of the states.
For \BlockStream, this coincides with the complexity analysis in \cref{subsec:complexity_analysis}.
In contrast, this is much more efficient than the exponential complexity of \ReverseStream with respect to $|Q|$.
This is because the size of the reversed DFA does not increase.

In both plots of \cref{fig:artificial-dfa}, we observe that \ReverseStream is faster than \BlockStream.
Moreover, in the left plot of \cref{fig:artificial-dfa}, the curve of \BlockStream is steeper than that of \ReverseStream.
This is because
\begin{oneenumeration}
\item the reversed DFA $\Rev{M_m}$ has the same size as $M_m$, 
\item \CircuitBootstrapping{} is about ten times slower than \Bootstrapping, and
\item $\BootstrappingInterval$ is much larger than $\BlockSize$.
\end{oneenumeration}

Overall, our experiment results confirm the complexity analysis in \cref{subsec:complexity_analysis}.
Moreover, the practical scalability of \ReverseStream with respect to the DFA size is much better than the worst case, at least for this benchmark.
Therefore, we answer RQ1 affirmatively.

\subsection{RQ2 and RQ3: Case Study on Blood Glucose Monitoring}\label{subsec:experiment-monitoring-bg}

\noindent{\textbf{Experimental Setup.}}
To answer RQ2,
we applied \ReverseStream and \BlockStream to the monitoring of blood glucose levels.
The monitored values are generated by simulation of type 1 diabetes patients.
We used the LTL formulae in \cref{tbl:glucose-ltl-formulae}.
These formulae are originally presented as signal temporal logic~\cite{DBLP:conf/formats/MalerN04} formulae~\cite{DBLP:conf/iotdi/YoungCGPF18,DBLP:conf/rv/CameronFMS15}, and
we obtained the LTL formulae in \cref{tbl:glucose-ltl-formulae} by discrete sampling.

To simulate blood glucose levels of type 1 diabetes patients,
we adopted simglucose, which is a Python implementation of UVA/Padova Type 1 Diabetes Simulator~\cite{man2014uva}.
We recorded the blood glucose levels every one minute\footnote{Current continuous glucose monitors (e.g., Dexcom G4 PLATINUM) record blood glucose levels every few minutes, and our sampling interval is realistic.}
and encoded each of them in nine bits.
For $\psi_1,\psi_2,\psi_4$, we used 720 minutes of the simulated values. %
For $\phi_1,\phi_4,\phi_5$, we used seven days of the values.
The parameters we used are $\BootstrappingInterval = 30000$, $B = 9$.

To answer RQ3, we encrypted plaintexts into TRGSW ciphertexts 1000 times using two single-board computers (ROCK64 and Raspberry Pi 4) and reported the average runtime.

\begin{table}[tb]\centering\scriptsize
        \caption{The safety LTL formulae used in our experiments. $\psi_1,\psi_2,\psi_4$ are originally from~\cite{DBLP:conf/rv/CameronFMS15}, and $\phi_1$, $\phi_4$, and $\phi_5$ are originally from~\cite{DBLP:conf/iotdi/YoungCGPF18}.}\label{tbl:glucose-ltl-formulae}
        \begin{tabularx}{\linewidth}{>{\centering\arraybackslash}p{5em}|X}\toprule
	     &\multicolumn{1}{c}{LTL formula} \\\midrule

                $\psi_1$ &
                $
                    \G_{[100, 700]}(p_8\lor p_9\lor (p_4\land p_7)\lor (p_5\land p_7)\lor (p_6\land p_7)\lor (p_2\land p_3\land p_7))
                $\\

                $\psi_2$ &
                $
                    \G_{[100, 700]}(\lnot p_9\lor (\lnot p_7\land \lnot p_8)\lor (\lnot p_5\land \lnot p_6\land \lnot p_8)\lor (\lnot p_4\land \lnot p_6\land \lnot p_8)\lor (\lnot p_3\land \lnot p_6\land \lnot p_8)\lor (\lnot p_2\land \lnot p_6\land \lnot p_8)\lor (\lnot p_1\land \lnot p_6\land \lnot p_8))
                $\\

                $\psi_4$ &
                $
                    \G_{[600, 700]}((\lnot p_8\land \lnot p_9)\lor (\lnot p_7\land \lnot p_9)\lor (\lnot p_4\land \lnot p_5\land \lnot p_6\land \lnot p_9)\lor (\lnot p_1\land \lnot p_2\land \lnot p_3\land \lnot p_5\land \lnot p_6\land \lnot p_9))
                $\\\midrule

                $\phi_1$ &
                $
                    \G((\lnot p_6\land \lnot p_7\land p_8\land \lnot p_9)\lor (\lnot p_5\land \lnot p_7\land p_8\land \lnot p_9)\lor (\lnot p_3\land \lnot p_4\land \lnot p_7\land p_8\land \lnot p_9)\lor (p_4\land p_7\land \lnot p_8\land \lnot p_9)\lor (p_5\land p_7\land \lnot p_8\land \lnot p_9)\lor (p_6\land p_7\land \lnot p_8\land \lnot p_9)\lor (p_1\land p_2\land p_3\land p_7\land \lnot p_8\land \lnot p_9))
                $\\

                $\phi_4$ &
                $
                    \G((\lnot p_7\land \lnot p_8\land \lnot p_9) \implies \F_{[0, 25]}(p_7\lor p_8\lor p_9))
                $ \\

                $\phi_5$ &
                $
                    \G(p_9\lor (p_3\land p_7\land p_8)\lor (p_4\land p_7\land p_8)\lor (p_5\land p_7\land p_8)\lor (p_6\land p_7\land p_8) \implies \F_{[0, 25]}((\lnot p_8\land \lnot p_9)\lor (\lnot p_7\land \lnot p_9)\lor (\lnot p_3\land \lnot p_4\land \lnot p_5\land \lnot p_6\land \lnot p_9)))
                $\\\bottomrule

        \end{tabularx}
\centering
\caption{Experimental results of blood glucose monitoring, where $Q$ is the state space of the monitoring DFA and $\Rev{Q}$ is the state space of the reversed DFA.}%
\label{tbl:glucose-experiment-result}
\scriptsize
\begin{tabular}{cccccc|r|r}\toprule
Formula $\phi$ & $|\phi|$ & $|Q|$
& $|\Rev{Q}|$ &
\begin{tabular}{@{}c@{}}\# of blood glucose values\end{tabular} & Algorithm & \begin{tabular}{@{}c@{}}Runtime (s)\end{tabular} & \begin{tabular}{@{}c@{}}Mean Runtime (ms/value)\end{tabular}   \\\midrule
\multirow{2}{*}{$\psi_1$} & \multirow{2}{*}{40963}   & \multirow{2}{*}{10524} & \multirow{2}{*}{2712974}    & \multirow{2}{*}{721}   & \ReverseStream    & $16021.06$ & $22220.62$ \\
                          &                          &                        &                             &                        & \BlockStream      & $132.68$ & $184.02$ \\\midrule
\multirow{2}{*}{$\psi_2$} & \multirow{2}{*}{75220}   & \multirow{2}{*}{11126} & \multirow{2}{*}{2885376}    & \multirow{2}{*}{721}   & \ReverseStream    & $17035.05$ & $23626.97$ \\
                          &                          &                        &                             &                        & \BlockStream      & $131.53$ & $182.43$ \\\midrule
\multirow{2}{*}{$\psi_4$} & \multirow{2}{*}{10392}   & \multirow{2}{*}{7026}  & \multirow{2}{*}{---}        & \multirow{2}{*}{721}   & \ReverseStream    & ---               & ---                 \\
                          &                          &                        &                             &                        & \BlockStream      & $35.42$ & $49.12$ \\\midrule\midrule

\multirow{2}{*}{$\phi_1$} & \multirow{2}{*}{195}     & \multirow{2}{*}{21}    & \multirow{2}{*}{20}         & \multirow{2}{*}{10081} & \ReverseStream    & $22.33$ & $2.21$ \\
                          &                          &                        &                             &                        & \BlockStream      & $1741.15$ & $172.72$ \\\midrule
\multirow{2}{*}{$\phi_4$} & \multirow{2}{*}{494}     & \multirow{2}{*}{237}   & \multirow{2}{*}{237}        & \multirow{2}{*}{10081} & \ReverseStream    & $42.23$ & $4.19$ \\
                          &                          &                        &                             &                        & \BlockStream      & $2073.45$ & $205.68$ \\\midrule
\multirow{2}{*}{$\phi_5$} & \multirow{2}{*}{1719}    & \multirow{2}{*}{390}   & \multirow{2}{*}{390}        & \multirow{2}{*}{10081} & \ReverseStream    & $54.87$ & $5.44$ \\
                          &                          &                        &                             &                        & \BlockStream      & $2084.50$ & $206.78$ \\\bottomrule
\end{tabular}
\end{table}

\noindent{\textbf{Results and Discussion (RQ2).}}
The results of the experiments are shown in \cref{tbl:glucose-experiment-result}.
The result for $\psi_4$ with \ReverseStream is missing because
the reversed DFA for $\psi_4$ is too huge, and its construction was aborted due to the memory limit.

Although the size of the reversed DFA was large for $\psi_1$ and $\psi_2$,
in all the cases, we observe that both \ReverseStream and \BlockStream
took at most 24 seconds to process each blood glucose value on average.
This is partly because $|Q|$ and $|Q^R|$ are not so large in comparison with the upper bound described in \cref{subsec:complexity_analysis},
i.e., doubly or singly exponential to $|\phi|$, respectively.
Since each value is recorded every one minute, at least on average, both algorithms finished processing each value
before the next measured value arrived, i.e., any congestion did not occur.
Therefore, our experiment results confirm that,
in a practical scenario of blood glucose monitoring,
both of our proposed algorithms are fast enough to be used in the online setting,
and we answer RQ2 affirmatively.

We also observe that average runtimes of $\psi_1,\psi_2,\psi_4$ and $\phi_1,\phi_4,\phi_5$
with \BlockStream are comparable, although the monitoring DFA of $\psi_1,\psi_2,\psi_4$
are significantly larger than those of $\phi_1,\phi_4,\phi_5$.
This is because the numbers of the reachable states
during execution are similar among these cases (from $1$ up to $27$ states).
As we mentioned in \cref{subsec:complexity_analysis},
\BlockStream only considers the states reachable
by a word of length $i$ when the $i$-th monitored ciphertext is consumed, and thus,
it ran much faster even if the monitoring DFA is large.

\noindent{\textbf{Results and Discussion (RQ3).}}
It took 40.41 and 1470.33 ms on average to encrypt a value of blood glucose (i.e., nine bits) on ROCK64 and Raspberry Pi 4, respectively.
Since each value is sampled every one minute,
our experiment results confirm that both machines are fast enough to be used in an
online setting. Therefore, we answer RQ3 affirmatively.

We also observe that encryption on ROCK64 is more than 35 times faster than that on Raspberry Pi 4.
This is mainly because of the hardware accelerator for AES, which is used in TFHEpp to generate TRGSW
ciphertexts.

\section{Conclusion}\label{sec:conclusion}

We presented the first oblivious online LTL monitoring protocol up to our knowledge.
Our protocol allows online LTL monitoring concealing
\begin{oneenumeration}
\item the client's monitored inputs from the server and
\item the server's LTL specification from the client.
\end{oneenumeration}
We proposed two online algorithms (\ReverseStream and \BlockStream)
using an FHE scheme called TFHE.
In addition to the complexity analysis, we 
experimentally confirmed the scalability and practicality of our algorithms
with an artificial benchmark and a case study on blood glucose level monitoring.

Our immediate future work is to extend our approaches to LTL semantics
with multiple values, e.g., LTL${}_3$~\cite{DBLP:journals/tosem/BauerLS11} and
rLTL~\cite{DBLP:conf/hybrid/MascleNSTW020}.
Extension to monitoring continuous-time signals, e.g., against an STL~\cite{DBLP:conf/formats/MalerN04} formula,
is also future work. 
Another future direction is to conduct a more realistic case study of our framework with
actual IoT devices. %

\noindent\textbf{Acknowledgements.}
This work was partially supported by JST ACT-X Grant No.~JPMJAX200U, JSPS KAKENHI Grant No.~22K17873 and 19H04084, and JST CREST Grant No.~JPMJCR19K5, JPMJCR2012, and JPMJCR21M3.

\ifdraft
\pagelimitmarker{19}
\fi
\newpage

\bibliographystyle{splncs04}
\bibliography{ref_dblp_generated,ref_manual}

\begin{ExtendedVersion}
\appendix

\section{Correctness and Security of the Protocol}\label{appendix:correctness-and-security-of-protocol}

\subsection{Correctness}

The correctness of our protocol shown in \cref{fig:protocol} is
formulated as follows:
\begin{theorem}
    Let $\phi$ be Bob's LTL formula and $M$ be the binary DFA converted from $\phi$.
    Let $\sigma_1,\sigma_2,\dots,\sigma_n\in2^{\AP}$ be Alice's inputs and
    $\sigma'_1,\sigma'_2,\dots,\sigma'_n\in\B^{|\AP|}$ be the encoded Alice's inputs.
    We assume
    \begin{oneenumeration}
     \item the protocol uses \ReverseStream (i.e., $b=0$)
           and, for every $i\in\{1,2,\dots,n\}$,
           $c_i$ in \cref{alg:reversed} can be correctly decrypted, or
     \item the protocol uses \BlockStream (i.e., $b=1$)
           and, for every $i\in\{1,2,\dots,n\}$,
           $c_i$ in \cref{alg:bbs} can be correctly decrypted.
    \end{oneenumeration}
    Then, by following the protocol described in \cref{fig:protocol},
    Alice obtains a Boolean value representing if
    $\sigma_1\sigma_2\dots\sigma_{i}\models\phi$
    for every $i\in\{1,2,\dots,n\}$.

\end{theorem}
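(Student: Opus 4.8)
The plan is to chain together three facts: the correctness of the underlying DFA-evaluation algorithm (\cref{thm:correctness-reversed} for \ReverseStream, \cref{thm:correctness-bbs} for \BlockStream); the faithfulness of Alice's bitwise encoding together with Bob's binarization of the $2^\AP$-alphabet DFA; and the observation that encryption, randomization, and decryption all preserve the carried Boolean value. Since the main body already announces that correctness follows \emph{directly} from the two algorithm theorems, the work is essentially bookkeeping that connects the algorithm's output indices to the protocol's per-observation outputs.

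First I would fix $i\in\{1,\dots,n\}$ and unwind the protocol to pin down exactly which ciphertext Alice decrypts for her $i$-th output and what plaintext it carries. Alice encodes $\sigma_i$ as $\sigma'_i=(\sigma'^1_i,\dots,\sigma'^{|\AP|}_i)$ and transmits the $|\AP|$ ciphertexts $d_i=(\Enc(\sigma'^1_i),\dots,\Enc(\sigma'^{|\AP|}_i))$, so after the first $i$ observations Bob has fed the algorithm the bit-ciphertexts encrypting $\sigma'_1\cdot\sigma'_2\cdots\sigma'_i\in\B^{i|\AP|}$, and $\Dec$ of each component recovers the corresponding bit. The crucial reindexing step is the \emph{index alignment}: the $n$ observations become $n|\AP|$ ciphertexts fed to the algorithm. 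For \BlockStream the block size is chosen as $B=|\AP|$, so the algorithm emits an output after each group of $|\AP|$ ciphertexts, matching one observation exactly; for \ReverseStream, which emits after every single bit-ciphertext, we select the output produced after the $(i|\AP|)$-th ciphertext. In either case, invoking \cref{thm:correctness-bbs} (resp.\ \cref{thm:correctness-reversed}) under the standing hypothesis that $c_i$ decrypts correctly yields, for the relevant output ciphertext $c$, the identity $\Dec(c)=M(\sigma'_1\cdot\sigma'_2\cdots\sigma'_i)$.

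Next I would connect this binary-DFA value back to LTL satisfaction. By Bob's construction in \cref{line:protocol:convert-phi}, $M$ is the binarization of the $2^\AP$-alphabet DFA obtained from the safety formula $\phi$, whose value on a finite prefix decides whether that prefix satisfies $\phi$ (equivalently, whether it is a bad prefix). The auxiliary states inserted during binarization only split each original transition into $|\AP|$ consecutive steps reading the block $\sigma'_j$ that matches Alice's encoding in \cref{line:protocol:encode}; hence the binary DFA faithfully simulates the original, and $M(\sigma'_1\cdots\sigma'_i)$ equals the value of the $2^\AP$-DFA on $\sigma_1\dots\sigma_i$, which indicates whether $\sigma_1\sigma_2\dots\sigma_i\models\phi$.

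Finally I would note that the remaining protocol steps are value-preserving: the randomization in \cref{line:protocol:randomize} is chosen precisely so that $\Dec(c')=\Dec(c)$, and Alice's decryption returns $\Dec(c')$ in plaintext. Composing the three facts gives $\Dec(c')=M(\sigma'_1\cdots\sigma'_i)$, which is exactly the Boolean value representing whether $\sigma_1\sigma_2\dots\sigma_i\models\phi$, for every $i$. The hard part will not be any analytic estimate but rather making the index alignment between the algorithm's per-ciphertext or per-block outputs and the protocol's per-observation outputs fully rigorous through the $|\AP|$-bit encoding; once that correspondence is set up, the statement reduces to a direct application of the already-established algorithm correctness and the value-preservation of encryption and randomization.
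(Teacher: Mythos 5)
Your proposal is correct and follows essentially the same route as the paper's own proof sketch: reduce to showing the decrypted result equals $M(\sigma'_1\cdot\sigma'_2\cdots\sigma'_i)$, invoke \cref{thm:correctness-reversed} or \cref{thm:correctness-bbs} on the ciphertexts fed so far, and observe that the randomization in \cref{line:protocol:randomize} preserves the plaintext. Your treatment is in fact somewhat more careful than the paper's, which glosses over both the index alignment between observations and bit-ciphertexts and the faithfulness of the binarization step that you spell out.
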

\begin{proof}[sketch]
    It suffices to show that, for every $i\in\{1,2,\dots, n\}$,
    the decryption of the resulted ciphertext $c'$ in \cref{line:protocol:obtain-output}
    is equal to $M(\sigma'_1\cdot\sigma'_2\cdots\sigma'_i)$, which indicates
    $\sigma_1\sigma_2\dots\sigma_i\models\phi$.
    When executing \cref{line:protocol:obtain-output} in \cref{fig:protocol},
    we can confirm that Bob has already fed
    the monitored ciphertexts $d_1,d_2,\dots,d_i$ to \cref{alg:reversed} or \cref{alg:bbs}.
    Therefore, by \cref{thm:correctness-reversed} and \cref{thm:correctness-bbs},
    the \TLWE ciphertext $c$ produced by these algorithms can be correctly decrypted,
    and we have $\Dec(c)=M(\sigma'_1\cdot\sigma'_2\cdots\sigma'_i)$.
    Moreover, randomization (i.e., adding $\Enc(0)$ to $c$) in \cref{line:protocol:randomize}
    does not change its message, i.e., $\Dec(c') = \Dec(c)$ holds.
    \qed{}
\end{proof}

\subsection{Security}

In this subsection, we formally define the privacy of Alice and Bob, and
based on these definitions, we prove the security of the protocol described in \cref{fig:protocol}.
We refer to \cite{DBLP:conf/tcc/IshaiP07}
for the formal definitions of the privacy of the client (Alice) and the server (Bob).
We note that the privacy of the server requires an additional assumption of TFHE called
\emph{shielded randomness leakage} (SRL) security~\cite{DBLP:journals/iacr/BrakerskiDGM20a,DBLP:journals/iacr/GayP20}.

\newcommand{\Sim}{\mathsf{Sim}\xspace}
\newcommand{\Gen}{\mathsf{Gen}\xspace}
\renewcommand{\Enc}{\mathsf{Enc}\xspace}
\newcommand{\Eval}{\mathsf{Eval}\xspace}
\renewcommand{\Dec}{\mathsf{Dec}\xspace}
\newcommand{\Adv}{\mathsf{Adv}\xspace}

\begin{definition}[Representation model ({\cite[Definition 2]{DBLP:conf/tcc/IshaiP07}})]
    \emph{A representation model} is a polynomial-time computable function
    $U\colon\B^*\times\B^*\to\B^*$, where $U(P,x)$ is referred to as
    the value returned by a ``program'' P on the input $x$.
\end{definition}

\begin{definition}[Computing on encrypted data ({\cite[Definition 5]{DBLP:conf/tcc/IshaiP07}})]\label{def:computing-on-encrypted-data}
    Let $U:\{0,1\}^*\times\{0,1\}^*\to\{0,1\}^*$ be a polynomial-time computable function.
    A \emph{protocol for evaluating programs from $U$ on encrypted data}
    is defined by a tuple of algorithms $(\Gen,\Enc,\Eval,\Dec)$ and
    proceeds as follows.
    \begin{description}
        \item[\textsc{Setup}]
            Given a security parameter $k$, the client computes $(\PK,\SK)\gets\Gen(1^k)$
            and saves $\SK$ for a later use.
        \item[\textsc{Encryption}]
            The client computes $c\gets\Enc(\PK,x)$, where $x$ is the input on which a program $P$
            should be evaluated.
        \item[\textsc{Evaluation}]
            Given the public key $\PK$, the ciphertext $c$, and a program $P$,
            the server computes an encrypted output $c'\gets\Eval(1^k,\PK,c,P)$.
        \item[\textsc{Decryption}]
            Given the encrypted output $c'$, the client outputs $y\gets\Dec(\SK, c')$.
    \end{description}
    We require that if both parties act according to the above protocol,
    then for every input $x$, program $P$, and security parameter $k\in\N$,
    the output $y$ of the final decryption phase is equal to $U(P,x)$ except,
    perhaps, with negligible probability in $k$.
\end{definition}

\begin{definition}[Client privacy ({\cite[Definition 6]{DBLP:conf/tcc/IshaiP07}})]\label{def:client-privacy}
    Let $\Pi = (\Gen, \Enc, \Eval, \Dec)$ be a protocol for computing on encrypted data.
    We say that $\Pi$ satisfies the \emph{client privacy} requirement
    if the advantage of any probabilistic polynomial time (PPT) adversary $\Adv$
    in the following game is negligible in the security parameter $k$:
    \begin{itemize}
        \item $\Adv$ is given $1^k$ and generates a pair $x_0,x_1\in\{0,1\}^*$ such that $|x_0|=|x_1|$.
        \item Let $b\Rgets\{0,1\}$, $(\PK,\SK)\gets\Gen(1^k)$, and $c\gets\Enc(\PK,x_b)$.
        \item $\Adv$ is given the challenge $(\PK, c)$ and outputs $a$ guess $b'$.
    \end{itemize}
    The advantage of $\Adv$ is defined as $\mathbf{Pr}[b=b']-1/2$
\end{definition}

\begin{definition}[Size hiding server privacy: honest-but-curious model ({\cite[Definition 7 and Definition 8]{DBLP:conf/tcc/IshaiP07}})]\label{def:size-hiding-server-privacy-semi-honest}
    Let $\Pi = (\Gen,\Enc,\Eval,\Dec)$ be a protocol for evaluating programs
    from a representation model $U$ on encrypted data. We say that $\Pi$ has
    \emph{computational server privacy in the honest-but-curious model}
    if there exists a PPT algorithm $\Sim$ such that the following holds.
    For every polynomial-size circuit family $D$, there is a negligible function $\epsilon(\cdot)$
    such that for every security parameter $k$,
    input $x\in\{0,1\}^*$, pair $(\PK, c)$ that can be generated by $\Gen,\Eval$ on inputs $k,x$, and
    program $P\in\{0,1\}^*$,  we have
    \[
    \mathrm{Pr}[D(\Eval(1^k,\PK,c,P)) = 1] - \mathrm{Pr}[D(\Sim(1^k,1^{|x|},\PK,U(P,x)))=1]\le \epsilon(k).
    \]
\end{definition}

\begin{definition}[Size hiding server privacy: fully malicious model ({\cite[Definition 12 and Definition 13]{DBLP:conf/tcc/IshaiP07}})]\label{def:size-hiding-server-privacy-malicious}
    Let $\Pi=(\Gen,\Enc,\Eval,\Dec)$ be a protocol for evaluating programs from a representation
    model $U$ on encrypted data. We say that $\Pi$ has \emph{computational server privacy
    in the fully malicious model} if there exists a computationally unbounded, randomized algorithm $\Sim$
    such that the following holds.
    For every polynomial-size circuit family $D$, there is a negligible function $\epsilon$
    such that for every security parameter $k$, arbitrary public key $\PK$,
    and arbitrary ciphertext $c^*$,
    there exists an ``effective'' input $x^*$ such that for every program $P\in\{0,1\}^*$, we have
    \[
    \mathrm{Pr}[D(\Eval(1^k,\PK,c,P)) = 1] - \mathrm{Pr}[D(\Sim(1^k,\PK,c^*,U(P,x^*)))=1]\le \epsilon(k).
    \]
\end{definition}

\begin{theorem}\label{thm:privacy-for-alice}
    The protocol described in \cref{fig:protocol} provides client privacy
    according to \cref{def:client-privacy}.
\end{theorem}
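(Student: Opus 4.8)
The plan is to reduce \emph{client privacy} directly to the IND-CPA security of TFHE, exactly as anticipated in \cref{subsec:protocol-flow}. First I would instantiate the abstract quadruple $(\Gen,\Enc,\Eval,\Dec)$ of \cref{def:computing-on-encrypted-data} with the concrete operations of the protocol: $\Gen(1^k)$ runs Alice's key generation from \crefrange{line:protocol:gen-sk}{line:protocol:send-pk-and-bk}, producing $\SK$ together with the public material $(\PK,\BK)$, which I fold into a single public key so that \cref{def:client-privacy} applies verbatim; and $\Enc(\PK,x)$ is the composition of the encoding in \cref{line:protocol:encode} with the bitwise \TRGSW encryption in \cref{line:protocol:encrypt}. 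The key observation is that, under this instantiation, the client-privacy game of \cref{def:client-privacy} \emph{is} the (multi-message) IND-CPA game for TFHE's bitwise encryption under public key $(\PK,\BK)$: the adversary sees only $(\PK,\BK)$ and the challenge ciphertext $\Enc(\PK,x_b)$, which are precisely the data Bob receives from Alice during the protocol.

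Next I would build the reduction. Given a PPT client-privacy adversary $\Adv$ with advantage $\alpha(k)$, I construct an IND-CPA adversary $\Adv'$ against TFHE. When $\Adv$ outputs $x_0,x_1$ with $|x_0|=|x_1|$, $\Adv'$ applies the encoding of \cref{line:protocol:encode} to both; since $|x_0|=|x_1|$, the two encodings have the same length, hence yield the same number of \TRGSW coordinates. $\Adv'$ forwards these to its own IND-CPA challenger (collapsing the multi-coordinate challenge to the single-message IND-CPA game by a standard hybrid over the $|\AP|$-bit blocks), relays the resulting challenge ciphertext together with $(\PK,\BK)$ to $\Adv$, and outputs $\Adv$'s guess. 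Because the encodings match in length, the simulation is perfect in both worlds and $\Adv'$ inherits the advantage $\alpha(k)$ of $\Adv$ up to the polynomial loss of the hybrid.

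The main obstacle is justifying that exposing the bootstrapping key $\BK$ does not break IND-CPA. Since $\BK$ is derived from $\SK$, this is \emph{not} immediate from plain IND-CPA of a public-key scheme; it relies on the standard TFHE assumption---stated explicitly in the Keys paragraph of \cref{subsec:keys}---that $\BK$ can be published without revealing $\SK$ (a circular-security-type property guaranteed by the chosen TFHE parameters). Under this assumption the joint distribution of $(\PK,\BK)$ is independent of the encrypted message, so $\Adv'$ may hand it to $\Adv$ without touching its own challenge, and the negligibility of $\alpha(k)$ then follows from the IND-CPA security of TFHE~\cite{bellare2005introduction}. The remaining pieces---the hybrid argument reducing the multi-ciphertext challenge to a single-ciphertext game, and checking that $\Eval$ and $\Dec$ play no role in the client-privacy game---are routine.
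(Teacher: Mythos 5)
Your proposal is correct and follows essentially the same route as the paper: the paper's own proof is a one-line sketch stating that client privacy ``readily follows'' from TFHE being an LWE-based FHE scheme (i.e., from its IND-CPA security, as noted in \cref{subsec:protocol-flow}), and your reduction is a detailed elaboration of exactly that argument. Your explicit handling of the bootstrapping key $\BK$ via a circular-security-type assumption is a point the paper leaves implicit, and it is a worthwhile refinement rather than a departure from the paper's approach.
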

\begin{proof}[sketch]
    Client privacy of the protocol readily follows from the fact that
    TFHE is an FHE scheme based on Learning-With-Errors problem~\cite{DBLP:journals/joc/ChillottiGGI20,DBLP:journals/jacm/Regev09}.
    \qed{}
\end{proof}

\begin{theorem}\label{thm:privacy-for-bob}
    Assuming the SRL security of TFHE,
    the protocol described in \cref{fig:protocol} provides size hiding server privacy
    against an honest-but-curious client as defined in \cref{def:size-hiding-server-privacy-semi-honest}.
\end{theorem}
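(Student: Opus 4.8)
The plan is to instantiate the simulator $\Sim$ required by \cref{def:size-hiding-server-privacy-semi-honest} and to show, through a short sequence of hybrids, that its output is computationally indistinguishable from the real evaluation output $\Eval(1^k,\PK,c,P)$. Here the program $P$ encodes Bob's formula $\phi$ (equivalently the DFA $M$), the input $x$ is Alice's encoded data, and $U(P,x)$ is the vector $(y_1,\dots,y_n)$ with $y_i = M(\sigma'_1\cdot\sigma'_2\cdots\sigma'_i)$; note that $|x| = n\cdot|\AP|$ reveals only $n$, as required for size hiding. The simulator is given $1^k$, $1^{|x|}$, $\PK$, and $U(P,x)$, but \emph{not} $P$, so it must fabricate the stream of result ciphertexts knowing only the plaintext results. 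I would define $\Sim$ to output, for each $i$, a \TLWE ciphertext obtained by taking a canonical (e.g.\ trivial) encoding of the bit $y_i$ and then applying exactly the randomization of \cref{line:protocol:randomize}, with the residual noise drawn from the ideal distribution prescribed by the SRL game rather than accumulated through homomorphic operations.

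First I would dispose of the message content. By \cref{thm:correctness-reversed} and \cref{thm:correctness-bbs}, the ciphertext $c_i$ produced by \ReverseStream{} or \BlockStream{} satisfies $\Dec(c_i)=y_i$, and randomization preserves the message, so $\Dec(c'_i)=y_i$. Hence the simulator, which also encrypts $y_i$, agrees with the real protocol on every decrypted value, and the entire content of the theorem reduces to the \emph{distribution} of the ciphertexts---their mask and noise---conditioned on a fixed message. Second I would unbias the mask: writing each $c'_i$ as a (mask, body) pair, the randomization step adds a fresh encryption of $0$, contributing an independent LWE sample, so by the standard near-uniformity argument for LWE (\cite{DBLP:journals/jacm/Regev09}) the mask of $c'_i$ is (computationally) uniform and independent of the homomorphic computation that produced $c_i$. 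This lets me pass to a hybrid $H_1$ in which every mask is replaced by a freshly sampled uniform element, matching the simulator's masks and eliminating the only computation-dependent part other than the noise.

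The hard part---and the reason the bare randomization argument is insufficient---will be the noise term. After randomization the noise in $c'_i$ is the convolution of a \emph{computation-dependent} noise inherited from Bob's sequence of \CMux{} and \Bootstrapping{} operations with the fresh randomization noise, and an honest-but-curious Alice holding $\SK$ can read off this noise exactly; a priori its distribution depends on $P$. This is precisely the leakage that SRL security is designed to neutralize. I would phrase a reduction that embeds the residual noise of $c'_i$ into the leakage oracle of the SRL game, so that SRL security of TFHE (\cite{DBLP:journals/iacr/BrakerskiDGM20a,DBLP:journals/iacr/GayP20}) yields a noise simulator depending only on the output message, giving the final hybrid $H_2$, which is exactly the output of $\Sim$. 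A subtlety I expect to dominate the bookkeeping is that the same secret key is reused across all $n$ outputs, so the reduction must proceed by a hybrid over $i=1,\dots,n$ (or invoke a multi-challenge form of SRL), incurring a factor $n$ that stays negligible since $n$ is polynomial in $k$. Combining the three steps by the triangle inequality bounds the distinguishing advantage of any polynomial-size circuit family $D$ by a negligible function, which establishes the claim.
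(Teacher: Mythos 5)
Your proposal is correct and takes essentially the same route as the paper: the paper's simulator likewise outputs a fresh public-key encryption of $U(P,x)$ and invokes SRL security to conclude that this is computationally indistinguishable from the real evaluated ciphertext, with the mask/noise decomposition you describe (randomization unbiasing the nonce via Regev's argument, SRL neutralizing the computation-dependent noise) being exactly the paper's stated justification. Your explicit hybrid over the $n$ rounds is a more careful rendering of the same argument, which the paper's sketch leaves implicit.
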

\begin{proof}[sketch]
    First, we abstract our protocol after the $i$-th round of execution as 
    $c_i=\Eval(1^k,\PK,\{d_{j}\}_{j=1}^{i}, M, i)$ where
    it holds that decryption of $c_i$ is equal to $M(\sigma'_{1}\cdot\sigma'_{2}\cdots\sigma'_{i})$.

    On inputs $(1^k,1^{|x|},\PK,U(P,x))$,
    we define a simulator $\Sim$, which proceeds as follows:
    \begin{itemize}
        \item $c \gets \Enc_{\PK}(U(P,x))$
        \item Return $c$
    \end{itemize}
    
    By the assumption of the SRL security of TFHE, it holds that
    \[
        c\cequiv c_{i}.
    \]
    Therefore, for any PPT adversary $\mathcal{A}$, we have that
    \begin{align*}
        \Pr[\mathcal{A}(\Eval(1^k,\PK,\{d_{j}\}_{j=1}^{i}, M, i))=1]
        -\Pr[\mathcal{A}(\Sim(1^k,1^{|x|},\PK,U(P,x))=1)]\\
        = \epsilon(k)
    \end{align*}
    and the theorem follows.
    \qed{}
\end{proof}

\begin{theorem}\label{thm:privacy-for-bob-malicious}
    Assuming the SRL security of TFHE and the honest generation of the public key $\PK$,
    the protocol described in \cref{fig:protocol} provides size hiding server privacy
    against a malicious client as defined in \cref{def:size-hiding-server-privacy-malicious}.
\end{theorem}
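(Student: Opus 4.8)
The plan is to mirror the honest-but-curious argument of \cref{thm:privacy-for-bob}, while exploiting the one feature that \cref{def:size-hiding-server-privacy-malicious} grants us that the semi-honest definition does not: the simulator $\Sim$ may be computationally unbounded. This is exactly what is needed to manufacture the \emph{effective input} $x^*$ from an adversarially produced ciphertext. First, I would use the hypothesis that $\PK$ is honestly generated to conclude that a well-defined secret key $\SK$ (with the honest noise distribution) exists and is recoverable from $\PK$ by an unbounded procedure. Given the arbitrary client ciphertext $c^*$, I would then define $x^*$ by decrypting $c^*$ under $\SK$ coordinatewise, decoding each coordinate to the nearest valid plaintext $\sigma'^j_i \in \B$ when $c^*$ is malformed. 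This fixes $x^*$ as a function of $c^*$ alone (independent of the program $P$, as the definition requires), yielding an effective monitored word $\sigma'_1 \cdot \sigma'_2 \cdots \sigma'_i$ for which $U(P, x^*) = M(\sigma'_1 \cdot \sigma'_2 \cdots \sigma'_i)$ is precisely the value the honest server's \Eval computes on $c^*$.

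Next, I would define $\Sim$ as in the semi-honest case: on input $(1^k, \PK, c^*, U(P,x^*))$, it produces a fresh encryption $c \gets \Enc_{\PK}(U(P,x^*))$ and returns $c$. The heart of the proof is to establish $\Eval(1^k, \PK, c^*, P) \cequiv c$ for this adversarial $c^*$. I would invoke SRL security of TFHE together with the re-randomization in \cref{line:protocol:randomize}: the randomization step un-biases the nonce of the output \TLWE ciphertext, while SRL security guarantees that the residual noise term carries no information about the homomorphic operations Bob applied (hence none about $M$ or $\phi$) beyond the decrypted value $U(P,x^*)$. Combining fresh-nonce randomization with SRL-shielded noise gives the required indistinguishability, and the advantage of any polynomial-size distinguisher $D$ is bounded by the SRL advantage, which is negligible in $k$.

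The step I expect to be the main obstacle is handling \emph{malformed} $c^*$ robustly. In the honest-but-curious setting the client's ciphertext decrypts cleanly and homomorphic correctness is immediate, whereas a malicious client may submit a ciphertext with large noise or degenerate structure; I must argue both that the decoding of $c^*$ to $x^*$ is well-defined and consistent with what \Eval effectively computes on $c^*$, and that SRL security still applies to adversarial inputs. The honest-$\PK$ assumption is exactly what makes the latter tenable, since SRL security is formulated relative to honestly sampled keys and noise distributions; without it a maliciously generated key could bias the noise and void the shielding guarantee. I would therefore structure the reduction so that the only adversarial freedom lies in $c^*$ and $P$, with $\PK$ (and hence $\SK$ and the noise law) fixed honestly, and flag that relaxing the honest-key assumption would require the malicious-key variant of SRL, which is outside the present scope.
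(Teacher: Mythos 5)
Your proposal takes essentially the same route as the paper: the paper's own treatment of this theorem is a short remark that \emph{omits} the formal proof, asserting that honest generation of $\PK$ combined with the SRL security of TFHE (citing Ducas--Stehl\'e) yields server privacy against a malicious client---precisely the two ingredients you invoke. Your fleshed-out argument (using the computationally unbounded simulator to extract the effective input $x^*$ by decrypting $c^*$, re-encrypting $U(P,x^*)$ exactly as in the honest-but-curious simulator, and appealing to SRL security plus the re-randomization step for indistinguishability) is the standard formalization of what the paper gestures at, and is in fact more detailed than the paper's own omitted proof.
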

As noted in~\cite{DBLP:conf/eurocrypt/DucasS16}, 
a malicious client may try  to generate invalid ciphertexts or public keys to gain an advantage 
against the DFA held by the server. 
Fortunately, our protocol can easily achieve size hiding server privacy against a malicious client 
(i.e., Alice) by ensuring the honest generation of the public key $\PK$ in combined with 
the SRL security of TFHE~\cite{DBLP:conf/eurocrypt/DucasS16}, and we omit a formal proof for
Theorem~\ref{thm:privacy-for-bob-malicious}.

\section{TFHE Parameters}\label{appendix:TFHE_parameters}
The parameters for TFHE are the foundation of the security of our proposed protocols and greatly affect the performance. 
The security of the parameters is estimated by using lwe-estimator~\cite{AlbrechtPlayerScott+2015+169+203}. 
We use the default parameter provided by TFHEpp~\cite{DBLP:conf/uss/MatsuokaBMS021}. 
This parameter is selected to maximize the performance of Bootstrapping while achieving 128-bit security.
In \cref{tab:params}, we show all necessary parameters used in our implementation and briefly explain the meaning of each parameter.
The parameters which directly affect the security guarantee are $q, \underline{N},\underline{\alpha}, N, \alpha, \overline{N}$, and $\overline{\alpha}$. 
The example of the estimation code for the security of TFHE using lwe-estimator is given at~\cite{estcode}. 
In the following subsections, we briefly explain some ideas which are omitted in the main text for simplicity but are necessary to understand the meaning of parameters.

\begin{table}[bt]
    \centering\scriptsize
    \caption{Table of TFHE parameters}
    \label{tab:params}
    \begin{tabular}{c|c|c}\toprule
         Parameter& Value in implementation & Meaning   \\\midrule
         $q$ & $2^{32}$ & The modulus for discretizing Torus for lvl0 and lvl1\\
         $\overline{q}$ & $2^{64}$ & The modulus for discretizing Torus for lvl2\\
         $\underline{N}$& $635$ & The length of the \TLWElvlz ciphertext\\
         $\underline{\alpha}$ & $2^{-15}$ & \begin{tabular}{c}The standard deviation of the noise for\\ the fresh \TLWElvlz ciphertext\end{tabular}\\
         $N$ & $2^{10}$ & \begin{tabular}{c}
              The length of the \TLWElvlo ciphertext \\
              and the dimension of \TRLWElvlo ciphertext\\
         \end{tabular}\\
         $\alpha$ & $2^{-25}$ & \begin{tabular}{c}The standard deviation of the noise for\\ the fresh \TLWElvlo, \TRLWElvlo and \TRGSWlvlo ciphertext\end{tabular}\\
         $\overline{N}$& $2^{11}$ & \begin{tabular}{c}
              The length of the \TLWElvlt ciphertext \\
              and the dimension of \TRLWElvlt ciphertext\\
         \end{tabular}\\
         $\overline{\alpha}$ & $2^{-44}$ & \begin{tabular}{c}The standard deviation of the noise for \\ the fresh \TLWElvlt, \TRLWElvlt and \TRGSWlvlt ciphertext\end{tabular}\\
         $l$ & 3 & Half of the number of rows in \TRGSWlvlo  \\
         $Bg$ & $2^{6}$ & The base for \CMux with \TRGSWlvlo\\
         $\overline{l}$ & 4 & Half of the number of rows in \TRGSWlvlt  \\
         $\overline{Bg}$ & $2^{9}$ & The base for \CMux with \TRGSWlvlt\\
         $t$ & 7 & The number of digits in \IdentityKeySwitching \\
         $base$ & $2^{2}$ & The base for \IdentityKeySwitching \\
         $\overline{t}$ & 10 & The number of digits in \PrivateKeySwitching \\
         $\overline{base}$ & $2^{3}$ & The base for \PrivateKeySwitching \\\bottomrule
    \end{tabular}
\end{table}

\subsection{Discretization of Torus}
TFHE uses Torus, $\mathbb{T}=\mathbb{R}/\mathbb{Z}$, i.e., the set of real number modulo 1, as one of the most fundamental primitives, but lwe-estimator can treat only $\mathbb{Z}_q$, i.e., the set of integers modulo $q$. 
Therefore, in actual implementation, we discretized Torus into $q$ parts ($\mathbb{T}_q$). 
By this discretization, we can reduce the security of \TLWE, \TRLWE, and \TRGSW into standard LWE, RLWE, and RGSW~\cite{cryptoeprint:2021:1402}. Therefore, we can use lwe-estimator to estimate the security of TFHE. 
In addition to that, because floating-point operations are generally slower than integer operations, this discretization also gives a performance benefit.
This is why we need the parameter $q$.

\subsection{Levels in TFHE}
In the main text, we introduced only one kind of \TLWE ciphertexts, which can be converted from \TRLWE by \SampleExtract. 
In the real implementation of TFHE, there are three kinds of \TLWE ciphertexts (\TLWElvlz, \TLWElvlo, and \TLWElvlt) and two kinds of \TRLWE ciphertexts (\TRLWElvlo and \TRLWElvlt).
\TLWElvlo and \TRLWElvlo are the ones introduced in the main text as \TLWE and \TRLWE, respectively. 
\TLWElvlz is more compact in ciphertext size than \TLWElvlo but needs more noise to establish 128-bit security. 
More noise means less capability for homomorphic computations. 
Therefore, \TLWElvlz only appears in \Bootstrapping to reduce the complexity. 
\TLWElvlt and \TRLWElvlt are larger in ciphertext size than \TLWElvlo and \TRLWElvlo, respectively. Thus, they need less noise to establish 128-bit security and have more capability for homomorphic computations. 
\TLWElvlt and \TRLWElvlt are used in \CircuitBootstrapping as intermediate representations. 
They can be used in our protocols instead of \TLWElvlo and \TRLWElvlo, but it will cause performance degradation due to their ciphertext size.

Because we can convert \TRLWElvlo and \TRLWElvlt to \TLWElvlo and \TLWElvlt, respectively by \SampleExtract and ``due to the absence of known cryptanalytic techniques exploiting algebraic structure'', it is standard to assume the security of \TRLWElvlo and \TRLWElvlt are the same as \TLWElvlo and \TLWElvlt respectively~\cite{10.1007/978-3-319-98113-0_19}.

\subsection{\IdentityKeySwitching}
\IdentityKeySwitching is one of the homomorphic operations in TFHE. 
This operation converts a \TLWElvlo ciphertext into a \TLWElvlz ciphertext which holds the same plaintext message. 
The parameters $t$ and  $base$ are used in this operation and are not related to security but the performance and the noise growth.
Faster parameters generally mean more noise growth in this operation. Thus, the parameters are selected to balance the performance and the noise growth.

\subsection{\PrivateKeySwitching}
\PrivateKeySwitching is one of the homomorphic operations in TFHE.
This operation converts a \TLWElvlt ciphertext into a \TRLWElvlo ciphertext which holds the result of applying the private Lipschitz (linear) function to the plaintext of the input \TLWElvlt ciphertext. 
In \CircuitBootstrapping, we need to apply the function which depends on the part of the secret key. Thus, \PrivateKeySwitching is used to hide the function to avoid leaking a part of the secret key.
The parameters $\overline{t}$ and $\overline{base}$ are used in this operation and not related to security, but the performance and the noise growth. 

\subsection{Parameters for \TRGSW}
In the main text, we only introduced \TRGSWlvlo as \TRGSW, but we also use \TRGSWlvlt in \CircuitBootstrapping.
The parameter $l$ and $\overline{l}$ determine the shape of \TRGSWlvlo and \TRGSWlvlt ciphertexts, respectively, and $Bg$ and $\overline{Bg}$ are used in \CMux with \TRGSWlvlo and \TRGSWlvlt ciphertexts, respectively. Therefore, they highly affect the performance of \CMux. 
A \TRGSWlvlo and \TRGSWlvlt ciphertext can be seen as the $2l$ and $2\overline{l}$ dimensional vector of \TRLWElvlo and \TRLWElvlt ciphertexts, respectively. 
Though each row encrypts the plaintext which is a multiple of the other row's plaintext, there are no known cryptanalytic techniques exploiting this fact. 
Therefore, the security of \TRGSWlvlo and \TRLWElvlt ciphertexts are assumed to be the same as \TRLWElvlo and \TLWElvlo, \TRLWElvlt and \TLWElvlt, respectively. 
Increasing $l$ and $\overline{l}$ means increasing the number of polynomial multiplications in \CMux, which are the heaviest operation in \CMux, but exponentially reduce the noise growth in \CMux. 
$Bg$ and $\overline{Bg}$ are related to noise growth only, and there is the optimal value for fixed $l$ and $\overline{l}$. 
Therefore, $l, Bg, \overline{l}$ and $Bg$ are selected to balance between the performance and the noise growth.

\section{Detailed Experiment Results and Discussion}\label{appendix:other_experiments}

\Cref{tbl:detailed-experiment-results-M_m-fixed-states} shows
the detailed results of $M_m$ when the number of states is fixed.
We observe that the memory usage is constant to the number of the monitored ciphertexts.
We also observe that, in both algorithms, the runtimes of \CMux{} and \CircuitBootstrapping{}
are linear to the length $n$ of the monitored ciphertexts.
These observations coincide with the complexity analysis in \cref{subsec:complexity_analysis}.
In contrast, we observe that the runtimes of \Bootstrapping{} do not change
so much from $n=30000$ to $50000$.
This is because we set $\BootstrappingInterval=30000$, and
\crefrange{line:reversed:for00}{line:reversed:for01} in \cref{alg:reversed} are executed only once.

\Cref{tbl:detailed-experiment-results-M_m-fixed-inputs} shows
the detailed results of $M_m$ when the number of the monitored ciphertexts is fixed.
The table shows that, in both algorithms, the runtimes of \CMux{} and \Bootstrapping{} are
linear to the number $m$ of the states,
and the runtimes of \CircuitBootstrapping{} are sublinear to $m$.
Since, as described in \cref{sec:experiment}, the number of states of the reversed DFA of $M$ is
equal to that of $M$, these observations coincide with the complexity analysis in \cref{subsec:complexity_analysis}.

In both \cref{tbl:detailed-experiment-results-M_m-fixed-states} and \cref{tbl:detailed-experiment-results-M_m-fixed-inputs},
we observe that the memory usage of \BlockStream is larger than that of \ReverseStream.
This is because \CircuitBootstrapping needs a larger bootstrapping key than \Bootstrapping{},
and we need to place the key on the memory when \CircuitBootstrapping{} is performed.

\Cref{tbl:detailed-experiment-results-blood-glucose} shows
the detailed results of blood glucose monitoring.
We observe that, when we use \ReverseStream,
the amounts of memory used for $\psi_1$ and $\psi_2$ are
much larger than those for $\phi_1,\phi_4,\phi_5$.
This is because the numbers of states of the reversed DFA of $\psi_1,\psi_2$
are much larger than those of $\phi_1,\phi_4,\phi_5$.

\begin{table}[tb]
    \centering\scriptsize
    \caption{Experimental results of $M_m$ when the number of the states (i.e., $m$) is fixed to $500$.}
    \label{tbl:detailed-experiment-results-M_m-fixed-states}
\begin{tabular}{c|c||c|c|c||c|c}\toprule
\multirow{3}{*}{Algorithm} &
\multirow{3}{*}{\begin{tabular}{@{}c@{}}\# of\\Monitored\\Ciphertexts\end{tabular}} &
\multicolumn{4}{c|}{Runtime (s)} &
\multirow{3}{*}{\begin{tabular}[c]{@{}c@{}}Memoery\\ Usage\\ (GiB)\end{tabular}} \\ \cline{3-6}
&
&
\multirow{2}{*}{\CMux{}} &
\multirow{2}{*}{\Bootstrapping{}} &
\multirow{2}{*}{\CircuitBootstrapping{}} &
\multirow{2}{*}{Total} & \\
&&&&&&\\
\midrule
\multirow{5}{*}{\ReverseStream}
 & 10000 & 6.94 & --- & --- & 6.98 & 0.34\\
 & 20000 & 13.90 & --- & --- & 13.97 & 0.34\\
 & 30000 & 20.79 & 0.75 & --- & 21.65 & 0.34\\
 & 40000 & 27.64 & 0.83 & --- & 28.63 & 0.34\\
 & 50000 & 34.55 & 0.71 & --- & 35.44 & 0.34\\
\midrule
\multirow{5}{*}{\BlockStream}
 & 10000 & 6.09 & --- & 16.60 & 24.07 & 2.72\\
 & 20000 & 12.33 & --- & 32.20 & 47.19 & 2.72\\
 & 30000 & 18.49 & --- & 47.81 & 70.32 & 2.72\\
 & 40000 & 24.48 & --- & 62.60 & 92.40 & 2.72\\
 & 50000 & 30.88 & --- & 78.71 & 116.11 & 2.72\\
\bottomrule
\end{tabular}
\end{table}

\begin{table}[tb]
    \centering\scriptsize
    \caption{Experimental results of $M_m$ when the number of monitored ciphertexts (i.e., $n$) is fixed to $50000$.}
    \label{tbl:detailed-experiment-results-M_m-fixed-inputs}
\begin{tabular}{c|c||c|c|c||c|c}\toprule
\multirow{3}{*}{Algorithm} &
\multirow{3}{*}{\begin{tabular}{@{}c@{}}\# of\\States\end{tabular}} &
\multicolumn{4}{c|}{Runtime (s)} &
\multirow{3}{*}{\begin{tabular}[c]{@{}c@{}}Memoery\\ Usage\\ (GiB)\end{tabular}} \\ \cline{3-6}
&
&
\multirow{2}{*}{\CMux{}} &
\multirow{2}{*}{\Bootstrapping{}} &
\multirow{2}{*}{\CircuitBootstrapping{}} &
\multirow{2}{*}{Total} & \\
&&&&&&\\
\midrule
\multirow{7}{*}{\ReverseStream}
 & 10 & 10.52 & 0.13 & --- & 10.70 & 0.33\\
 & 50 & 14.60 & 0.43 & --- & 15.14 & 0.33\\
 & 100 & 16.36 & 0.52 & --- & 17.04 & 0.33\\
 & 200 & 21.19 & 0.53 & --- & 21.84 & 0.33\\
 & 300 & 25.95 & 0.64 & --- & 26.72 & 0.33\\
 & 400 & 30.18 & 0.68 & --- & 31.03 & 0.34\\
 & 500 & 34.55 & 0.71 & --- & 35.44 & 0.34\\
\midrule
\multirow{7}{*}{\BlockStream}
 & 10 & 8.02 & --- & 60.70 & 71.35 & 2.71\\
 & 50 & 8.30 & --- & 65.20 & 76.41 & 2.71\\
 & 100 & 10.55 & --- & 73.09 & 87.03 & 2.71\\
 & 200 & 15.86 & --- & 75.38 & 95.50 & 2.71\\
 & 300 & 20.26 & --- & 79.00 & 104.43 & 2.72\\
 & 400 & 25.90 & --- & 79.73 & 111.56 & 2.72\\
 & 500 & 30.88 & --- & 78.71 & 116.11 & 2.72\\
\bottomrule
\end{tabular}
\end{table}

\begin{table}[tb]
    \centering\scriptsize
    \caption{Experimental results of blood glucose monitoring, where $Q$ is the state space of the monitoring DFA and $\Rev{Q}$ is the state space of the reversed DFA.}
    \label{tbl:detailed-experiment-results-blood-glucose}
    
\begin{tabular}{ccccc||c|c|c||c|c|c}\toprule
\multirow{3}{*}{Formula} &
\multirow{3}{*}{\begin{tabular}[c]{@{}c@{}}$|Q|$\end{tabular}} &
\multirow{3}{*}{\begin{tabular}[c]{@{}c@{}}$|\Rev{Q}|$\end{tabular}} &
\multirow{3}{*}{\begin{tabular}[c]{@{}c@{}}\# of blood\\ glucose\\ values\end{tabular}} &
\multirow{3}{*}{Algorithm} &
\multicolumn{4}{c|}{Runtime (s)}&
\multirow{3}{*}{\begin{tabular}[c]{@{}c@{}}Average\\ Runtime\\ (ms/value)\end{tabular}} &
\multirow{3}{*}{\begin{tabular}[c]{@{}c@{}}Memory\\ Usage\\ (GiB)\end{tabular}} \\ \cline{6-9}
&&&&&
\multicolumn{1}{c|}{\multirow{2}{*}{\CMux{}}} &
\multicolumn{1}{c|}{\multirow{2}{*}{\Bootstrapping{}}} &
\multicolumn{1}{c||}{\multirow{2}{*}{\begin{tabular}{@{}c@{}}\textsc{Circuit}\\\textsc{Bootstrapping}\end{tabular}}} &
\multirow{2}{*}{Total} &
&\\
&&&&&
\multicolumn{1}{c|}{} &
\multicolumn{1}{c|}{} &
\multicolumn{1}{c||}{} &
&&\\
\midrule
\multirow{2}{*}{$\psi_1$} & \multirow{2}{*}{10524} & \multirow{2}{*}{2712974}    & \multirow{2}{*}{721}   & \ReverseStream & $16005.55$ & --- & --- & $16021.06$ & $22220.62$ & $42.26$ \\
                          &                        &                             &                        & \BlockStream   & $1.33$ & --- & $105.96$ & $132.68$ & $184.02$ & $2.85$ \\\midrule
\multirow{2}{*}{$\psi_2$} & \multirow{2}{*}{11126} & \multirow{2}{*}{2885376}    & \multirow{2}{*}{721}   & \ReverseStream & $17019.28$ & --- & --- & $17035.05$ & $23626.97$ & $44.92$ \\
                          &                        &                             &                        & \BlockStream   & $1.35$ & --- & $104.96$ & $131.53$ & $182.43$ & $2.86$ \\\midrule
\multirow{2}{*}{$\psi_4$} & \multirow{2}{*}{7026}  & \multirow{2}{*}{---}        & \multirow{2}{*}{721}   & \ReverseStream & --- & --- & --- & --- & --- & --- \\
                          &                        &                             &                        & \BlockStream   & $0.50$ & --- & $20.64$ & $35.42$ & $49.12$ & $2.80$ \\\midrule\midrule

\multirow{2}{*}{$\phi_1$} & \multirow{2}{*}{21}    & \multirow{2}{*}{20}         & \multirow{2}{*}{10081} & \ReverseStream & $21.95$ & $0.24$ & --- & $22.33$ & $2.21$ & $0.33$ \\
                          &                        &                             &                        & \BlockStream   & $17.97$ & --- & $1679.74$ & $1741.15$ & $172.72$ & $2.69$ \\\midrule
\multirow{2}{*}{$\phi_4$} & \multirow{2}{*}{237}   & \multirow{2}{*}{237}        & \multirow{2}{*}{10081} & \ReverseStream & $41.43$ & $0.58$ & --- & $42.23$ & $4.19$ & $0.33$ \\
                          &                        &                             &                        & \BlockStream   & $32.84$ & --- & $1984.61$ & $2073.45$ & $205.68$ & $2.70$ \\\midrule
\multirow{2}{*}{$\phi_5$} & \multirow{2}{*}{390}   & \multirow{2}{*}{390}        & \multirow{2}{*}{10081} & \ReverseStream & $53.50$ & $1.07$ & --- & $54.87$ & $5.44$ & $0.34$ \\
                          &                        &                             &                        & \BlockStream   & $34.12$ & --- & $1988.20$ & $2084.50$ & $206.78$ & $2.70$ \\
\bottomrule
\end{tabular}
\end{table}

\section{Extended Protocol for General Output Interval}

\RB{Mention that $B=\OutputInterval\times|\AP|$.}

In this section, we extend our protocol (\cref{fig:protocol}) to output the monitoring result
after consuming every $\OutputInterval$ Alice's inputs, where $\OutputInterval$ is the interval of the monitoring output.
We present the extended protocol in \cref{fig:protocol-Iout}.
The main difference between \cref{fig:protocol-Iout} and \cref{fig:protocol} is that
the output is generated every after $\OutputInterval$ Alice's inputs consumed
(in \crefrange{line:protocol-Iout:output0}{line:protocol-Iout:output1}).
Since we can prove the correctness and the security of the extended protocol in the same way as the original one,
we omit the proof.

Notice that, in this setting, the block size $B$ for \BlockStream
can be taken to be equal to $\OutputInterval\times |\AP|$.
As the complexity analysis in \cref{subsec:complexity_analysis} implies,
the larger $B$ becomes the faster \BlockStream works.
Therefore, setting $\OutputInterval$ large improves the performance of \BlockStream.

\begin{figure}[tb]
\setlength{\interspacetitleruled}{0pt}%
\setlength{\algotitleheightrule}{0pt}%
\begin{algorithm}[H]\scriptsize
\Input{%
    Alice's private inputs $\sigma_1, \sigma_2, \sigma_3, \dots, \sigma_n\in 2^\AP$,
    Bob's private LTL formula $\phi$,
    $\OutputInterval\in\N^+$,
    and $b\in\B$}
\Output{%
    For every $i\in\mathbb{N}^+$ ($i\le\lfloor n/\OutputInterval\rfloor$),
    Alice's private output which represents 
    $\sigma_1\sigma_2\dots\sigma_{i\times\OutputInterval}\models\phi$}

Alice generates her secret key $\SK$.\;\label{line:protocol-Iout:gen-sk}
Alice generates her public key $\PK$ and bootstrapping key $\BK$ from $\SK$.\;\label{line:protocol-Iout:gen-pk-and-bk}
Alice sends $\PK$ and $\BK$ to Bob.\;\label{line:protocol-Iout:send-pk-and-bk}
Bob converts $\phi$ to a binary DFA $M=(Q, \Sigma=\B, \delta, q_0, F)$.\;\label{line:protocol-Iout:convert-phi}
\For{$i=1,2,\dots,n$}{\label{line:protocol-Iout:for0}%
    Alice encodes $\sigma_i$ to a sequence $\sigma'_i \coloneqq ({\sigma'}_i^1,{\sigma'}_i^{2},\dots,{\sigma'}_i^{|\AP|})\in\B^{|\AP|}$.\;\label{line:protocol-Iout:encode}
    Alice calculates $d_i \coloneqq (\Enc({\sigma'}_i^1),\Enc({\sigma'}_i^2),\dots\Enc({\sigma'}_i^{|\AP|}))$.\;\label{line:protocol-Iout:encrypt}
    Alice sends $d_i$ to Bob.\;\label{line:protocol-Iout:send-ctxt}
    Bob feeds the elements of $d_i$ to \ReverseStream (if $b=0$) or \BlockStream (if $b=1$).\;\label{line:protocol-Iout:feed-ctxt}
    \If{$i\mod \OutputInterval = 0$}{\label{line:protocol-Iout:output0}%
        \tcp{$\sigma'_1\cdot\sigma'_2\cdots\sigma'_i$ refers ${\sigma'}_1^{1}\dots{\sigma'}_1^{|\AP|}{\sigma'}_2^{1}\dots{\sigma'}_2^{|\AP|}{\sigma'}_3^{1}\dots{\sigma'}_i^{|\AP|}$.}
        Bob obtains the output \TLWE ciphertext $c$ produced by the algorithm,
        where $\Dec(c)=M(\sigma'_1\cdot\sigma'_2\cdot\cdots\cdot\sigma'_i)$.\;\label{line:protocol-Iout:obtain-output}
        Bob randomizes $c$ to obtain $c'$ so that $\Dec(c) = \Dec(c')$.\;\label{line:protocol-Iout:randomize}
        Bob sends $c'$ to Alice.\;
        Alice calculates $\Dec(c')$ to obtain the result in plaintext.\;\label{line:protocol-Iout:output1}
    }
}
\end{algorithm}
\caption{Protocol of oblivious online LTL monitoring (extended for the output interval $\OutputInterval$).}\label{fig:protocol-Iout}
\end{figure}

\end{ExtendedVersion}

\end{document}